%

\let\oldvec\vec

\documentclass[runningheads,orivec]{llncs}
\usepackage{amsbsy}
\usepackage{macros}
\usepackage[numbers,sort,sectionbib]{natbib}
\hypersetup{draft=true}
\tightams{2}

\let\vec\oldvec

\begin{document}

\begin{frontmatter}

\title{%
(Towards a) Statistical Probabilistic Lazy Lambda Calculus%
  }

\author{Radha Jagadeesan}

\institute{%
  School of CTI, DePaul University, Chicago, IL 60604, USA \\
  \email{rjagadee@depaul.edu}
}

\maketitle

\begin{abstract}
We study  the desiderata on a model for statistical probabilistic programming languages.  We argue that they can be met by a combination of traditional tools, namely open bisimulation and probabilistic simulation.  
\end{abstract}
\end{frontmatter}

\pagenumbering{arabic}

\section{Introduction}
The thesis of this article is that open bisimulation is a useful methodology to describe the semantics of statistical probabilistic languages (henceforth \statprob).  

~\citet{vandemeent2018introduction} provides a textbook introduction to these languages.  To a base programming
language, \eg, with state, higher order functions and recursion, add two key features to get a statistical probabilistic language.
\begin{description}
\item[Forward simulation: ]   The ability to sample/draw values at random from (perhaps continuous) distributions, to describe forward  stochastic simulation.
\item[Inference: ]  The ability to condition the posterior distributions of values of variables via observations, and compute them via inference.
\end{description}
~\cite{vandemeent2018introduction} describes \statprob-languages as ``performing Bayesian inference using the tools of
computer science: programming language for model denotation and statistical
inference algorithms for computing the conditional distribution
of program inputs that could have given rise to the observed program
output.''.  ~\citet{10.1145/3290349} illustrates these key features with the following example: 
\begin{verbatim}
let a = normal(0,2) in
  score(normal(1.1 | a*1, 0.25))
  score(normal(1.9 | a*2, 0.25))
  score(normal(2.7 | a*3, 0.25))
in a
\end{verbatim}
The program postulates a linear function $a \times x$ with slope $a$. Three noisy measurements $1.1,1.9,2.7$ at $x=1,2,3$  respectively are postulated to have noise modeled as a normal distribution with standard deviation $0.25$.  The goal
is to find a posterior distribution on the slope $a$.  Operationally, a sampler draw from the prior, which is a normal with  mean $0$ and standard deviation $2$.  The posterior distribution is constrained using the resulting samples of the three data points.  While the forward simulation yields the un-normalised posterior distribution, an inference algorithm is used to approximate its normalization and compute the posterior distribution on $a$.  Arguably, the evaluators that implement such conditioning are the proximate cause  for the recent explosion in interest in this area. 

In the rest of this introduction, we provide an overview of open bisimulation and the rationale for its use in the current setting.

\subsection{Open bisimulation}
Consider the study of applicative bisimulation in the pure untyped lazy lambda calculus by~\citet{10.5555/119830.119834},~\citet{ ABRAMSKY1993159} and~\citet{Ong88}.  In particular, recall ~\citet{Gordon95}'s LTS approach to applicative bisimulation.
\begin{description}
\item[Internal: ] A non-value term $\atrm$ has a $\tau$ transition to $\atrm'$ if
  $\atrm$ reduces in one step to $\atrm'$.
\item[Applicative test: ] $\aval = \abs{\avar}{\atrm}$ has a transition
  labeled $\aval'$ to the application $\aval \A \aval'$.
\end{description}
Two terms are bisimilar if the associated transition systems are bisimilar, \ie, if their convergence properties agree and each
applicative test yields bisimilar terms.      

In this article, we build on~\citet{LASSEN1999346}'s presentation of ~\citet{SANGIORGI1994120}'s treatment.  We follow the ``open bisimulation'' presentation of Lassen's later work~\citep{DBLP:conf/lics/Lassen05,DBLP:conf/lics/LassenL08,DBLP:conf/birthday/StovringL09} in a form that validates $\eta$-expansion.   In contrast to applicative bisimulation, the applicative tests are restricted to (perhaps new) variable names $\asym$:
\begin{description}
\item[Applicative test: ] $\aval=\abs{\avar}{\atrm}$ has a transition
  labeled $\ltsapp{\asym}$ to the application $\aval \A \asym$.
\end{description}
``Open'' application results in a new category of normal forms of the form $\asym \A \atrms$, for a vector of terms $\atrms$, forcing the consideration of ``external call'' tests:
\begin{description}
\item[External call: ] A value $\asym \A \atrms$ has a transition
  labeled $\ltsfuncall{\asym}{\PBR{i,n}}$ to $\atrms\PBR{i}$ (the $i$'th term in $\atrms$), where $|\atrms| = n$.
\end{description}
~\citet{LASSEN1999346} shows that Levy-Longo tree equivalence is the biggest bisimulation for such a transition system.    In subsequent papers beginning with\citep{DBLP:conf/lics/Lassen05}, Lassen and his coauthors studies a full suite of programming language features, including control~\cite{DBLP:conf/lics/Lassen06}, and state~\citep{DBLP:conf/birthday/StovringL09}.~\citet{DBLP:journals/taosd/JagadeesanPR09} adapt this approach to aspect-oriented programming languages.    The open-bisimulation approach has also been used to study issues related to types and parametricity, \eg, ~\citet{DBLP:conf/lics/LassenL08} provide a coinduction principle to simplify proofs about existential types and~\citet{DBLP:conf/fossacs/JaberT18} show that Strachey parametricity implies Reynolds parametricity. 

Why does this approach work so well?  ~\citet{DBLP:conf/csl/LevyS14} provides a hint by demonstrating a tight correspondence between open bisimulation and game semantics, ~\citet{AbramskyJM00},~\citet{HylandO00}, and~\citet{DBLP:conf/lfcs/Nickau94}, albeit for a small calculus with a limited type theory.  This has led Paul Levy to coin the evocative (but arguably unnattractive?!) term ``Operational Game Semantics'' to describe the approaches based on open bisimulation. In this article, we do not further the study needed to resolve the suspicion of  a more pervasive and general connection between game semantics and open bisimulation.    Rather, what is crucial is the commonality between the semantics of programs in these two approaches, namely:
\begin{center}
{\em The semantics of a program is a set of traces}.
\end{center}
Remarkably, there is no explicit mention of any higher order structure, such as abstraction and higher order functionals.  

On the games side, this is particularly true of the~\citet{AbramskyJM00} style of games; the~\citet{HylandO00} games and~\citet{DBLP:conf/lfcs/Nickau94} games also appeal to a ``pointer'' structure that is not intrinsic to a linear trace.  The AJM games inherit this feature from the Geometry of Interaction of~\citet{GIRARD1989221}.   In this style of model, a single top level, first order,  fixpoint iterator serves as the control engine driving the execution.  The complications and subtlelty of higher order control flow is handled purely by dataflow over structured tokens, an idea made precise in~\citet{DBLP:journals/iandc/AbramskyJ94}.

On the open bisimulation side, the transition system identifies the (set of) paths through the normal form of the program as the invariant of the bisimulation class of the program;  a familiar perspective of GOI models as seen in~\citet{151631}.   The only residue of higher order computation in this first-order world view is perhaps the partition of the labels of the transition system into moves associated with the program and the environment.  

This seemingly abstruse technical point acquires particular relevance when considering the semantics of statistical programming languages.

\subsection{Semantics of statistical probabilistic languages}
There are two serious impediments to a systematically structured study of statistical probabilistic languages.  

First, the semantics of  a simple (pseudo)-random value generators  is already troublesome.  Standard probability theory does not interact well with higher order functions, since the category of measurable spaces is not cartesian closed.  From the domain theoretic perspective,~\citet{10.5555/77350.77370} provide a construction of a probablistic powerdomain, and show that the probabilistic powerdomain of a continuous domain is again continuous.  The issues with reconciling the probabilistic powerdomain and function spaces are well known, and remain unresolved to date; see~\citet{JUNG199870}.   This leads~\citet{HOSY17} to identify the mismatch with probabilistic  programming languages where ``programs may use both higher-order functions and
continuous distributions, or even define a probability distribution on functions.''

Second, the accurate modeling of conditional probabilities is subtle in the presence of possibly infinite computations.  This computation requires normalization that is not monotone wrt the order on (sub)-probability distributions; \eg, consider the following weighted sums on  distinct constants $\tru,\fls$:
\[ \PBR{\pterm{0.2}{\tru}, \pterm{0.2}{\fls}} \leq \PBR{\pterm{0.2}{\tru}, \pterm{0.3}{\fls}} \]
whereas after normalization, we get:
\[ \PBR{\pterm{0.5}{\tru}, \pterm{0.5}{\fls}} \not\leq \PBR{\pterm{0.4}{\tru}, \pterm{0.6}{\fls}} \]
Thus, the usual view of infinite computations as a supremum of an increasing chain of finite approximants needs to be revisited; we are unable to rely purely on the order structure to compute the probability.    This issue may not be fatal in the setting of purely countable probabilities; however, the setting of continuous probabilities perforce requires a notion of limits that is different from the one provided by the order theory of simulation.  

\paragraph*{Our approach. } 
We address these issues as follows.  Our calculus of choice is an untyped lambda calculus with binary choice $\Lambda$, with a lazy reduction scheme to weak head normal forms\footnote{The choice of an untyped calculus is merely to reduce the syntactic overhead in the treatment. }.
 
We define open probabilistic simulation, $\simreln$, by a direct combination of two traditional ingredients.  
\begin{itemize}
\item  Open (bi)simulation for lazy lambda calculi, adapted to account for discrete distributions.  

Probability impacts the notion of convergence.  Let $\ycomb$ be a fixpoint combinator , $\Id$ be $\abs{\avar}{\avar}$, and let $\atrm = \abs{\avar}{ \CBR{\pterm{0.5}{\Id}, \pterm{0.5}{\avar}}}$.  Then, $\ycomb \A \atrm$ converges with probability $1$ to $\Id$, but the recognition of this fact {\em requires} the complete infinite unfolding of the computation.  

\item  Probabilistic (weak) simulation for labelled transition systems~\citep{LS}.

This definition incorporates a ``splitting lemma'' characteristic of probabilities in various treatments in the literature, recognized by~\citet{10.1109/LICS.2007.15} to be the Kleisli construction on the  probabilistic power domain.   Probabilistic determinacy is crucial, and ensures that the closure ordinal of the simulation functional is $\omega$.  
\end{itemize}
We demonstrate that simulation is a congruence.  In the context of the simple discrete probability embodied in this calculus, this development already shows how the first-order treatment of open bisimulation allows us to sidestep the troublesome interaction of probability and function spaces.

Next, we develop an order theory of simulation.  In effect, this study is an examination of $D = {\cal P}_{\text{prob}}(D \Rightarrow\ D)$, except for the unfortunate accident that we do not know how to construct such a $D$!   We adapt the finite Levy-Longo trees of~\citet{Ong88} to account for discrete probabilistic sums, and show that it constitutes  a basis for $\PBR{\Lambda,\simreln}$.    This structure suffices to define ``the'' Lawson topology~\citep{contlat} on $\PBR{\Lambda,\simreln}$, and define a metric completion that yields a compact Polish space $\close{\lquot}$.   This has two consequences:
\begin{itemize}
\item Since an omega $\simreln$-chain is a convergent sequence in $\close{\lquot}$, this completion makes $\close{\lquot}$ into a continuous dcpo.  Since $\PBR{\Lambda,\simreln}$ already contains finite valuations, we deduce that $\close{\lquot}$ is rich enough to contain continuous probability distributions, along with an approximation theory by finite valuations.  

\item ~\citet{DBLP:conf/fossacs/BreugelMOW03} show that the Lawson topology on the probabilistic powerdomain of a continuous dcpo $D$ coincides with the weak topology on the probabilistic measures of the Lawson topology of $D$. Convergence in $\close{\lquot}$ thus provides a meaningful definition of the convergence of probability measures.  In the specific example of normalization,  this suggests a compelling operational intuition to describe the results of normalizing the computation represented by a term $\atrmm$. Consider the net of normalized finite evolutions of $\atrmm$; the result of normalization exists if the net is convergent.     
\end{itemize}

\paragraph{Other approaches. }
Rather than consider a separate probabilistic powerdomain construction, there is a line of work that considers types that are already closed under probability.  In this spirit,~\citet{10.1145/507382.507385} describe a semantics for a probabilistic extension of Idealized Algol; the key enhancement to traditional game semantics is to move from deterministic strategies to probabilistically deterministic strategies.  ~\citet{DANOS2011966} describe probabilistic coherence spaces where types are interpreted  as convex sets and programs as power series.~\citet{10.1145/2535838.2535865} establish a remarkable equational (but not inequational!) full abstraction result for this semantics.~\citet{Goubault-Larrecq2019} motivates why such a result holds by showing how to define a ``poor mans parallel or'' in a probabilistic language.  

~\citet{10.1145/3350618} use environmental bisimulation in a thorough treatment of probabilistic languages.  Environmental bisimulation tracks an observer's knowledge about values computed during the bisimulation game with aim of simplifying the delicate proofs of congruence in applicative bisimulation.    We gratefully acknowledge their impact on the formalization of the operational aspects of the calculus.  

Both the above lines of research resolve the tension between the probabilistic powerdomain and function spaces. However, they do not account for the normalization of probability distributions.    It is plausible that our techniques that study the order theory of simulation can be adapted to these settings.  

A second line of work is on Quasi Borel Spaces~\citep{10.1145/3290349,HOSY17}.   In our reading of this approach, it diagnoses the problems of modeling statistical probabilistic languages as arising from the interplay of approximating data and approximating probabilities.  So, the radical design idea is just to separate them altogether.~\citet{HOSY17} shows that this category is cartesian closed, and~\citet{10.1145/3290349} shows how to interpret a rich type theory by exploiting an elegant integration with axiomatic domain theory.  This approach addresses the full expressiveness of statistical probabilistic languages.  However, since we are forced out of the realm of standard measure theory, we are perforce forced to ``reinvent the wheel'' for probabilistic reasoning.   The full integration with classical reasoning techniques of programming languages remains to be resolved in future work.    

Our suggested approach for \statprob-languages inherits powerful coinductive reasoning from open bisimulation, approximation techniques from the underlying domain theory and its intimately related metric space, and stays within the confines of classical measure theory.  However, as it stands, this paper is merely a promise.  Its full realization via a thorough study of a full language with continuous probability distributions in this setting is left to future work.    

\paragraph{Samson's impact on this article. } 
This article falls squarely into the areas pioneered by Samson, starting with the origins of open bisimulation from his study of the lazy lambda calculus and game semantics.  We conclude this introduction by tracing the (personal) impact of Samson's work even in the matter of probabilistic methods.   My interest in probabilistic computation is inspired by my thesis advisor and friend, Prakash Panangaden, an energetic and enthusiastic advocate~\citep{10.5555/1717290} of probabilistic methods in semantics.  In analogy to ~\citet{ABRAMSKY1991161}'s analysis of strong bisimulation,~\citet{DBLP:journals/iandc/DesharnaisGJP03} explores a domain equation for strong probabilistic bisimulation.~\citet{DBLP:journals/iandc/DesharnaisGJP03}'s study of simulation and this article's analysis of the simulation relation in this article draw heavily on the intuitions inspired by~\citet{ABRAMSKY19911}.

 \section{$\Lambda$: probabilistic lazy lambda calculus}

We study a lazy lambda calculus with countable formal sums.  Despite the title of this section, at this stage, we do not associate formal probabilities with the terms; instead, we view the formal sums of terms merely as weighted terms.   

Our technical definitions follow the general style of~\citet{10.1145/3350618}, adapted to our setting of open bisimulation.   

\subsection{Syntax}

\begin{rdisplaytab}[\renewcommand{\displayratio}{.23}]{ $\Lambda$.  Syntax.}{}{}\label{syntax}
  \rcategoryset{\aprob,\bprob}{Probabilities in $\SBR{0 \ldots 1}$}{}
  \\[1ex]
  \rcategoryalt{\atrmm,\btrmm,\ctrmm}{Distributions of terms}{}
   \rentry{\CBR{\pterm{\aprob}{\atrm}}}{\aprob}{\fn{\atrm}}
   \rentry{\union{\atrmm}{\btrmm}}{\probfull{\atrmm}+ \probfull{\btrmm}}{\fn{\atrmm} \cup \fn{\btrmm}}
\\[1ex]
  \rcategoryset{\avar,\bvar,\aprc,\asym,\bsym,\csym}{Variable Names}{}
  \\[1ex]
  \rcategoryalt{\atrm,\btrm,\ctrm}{Terms}{}
  \rentry{\avar}{}{\CBR{\avar}}
  \rentry{\abs{\avar}{\atrmm}}{}{\fn{\atrmm} \setminus \CBR{\avar}}
  \rentry{\atrmm\A\btrmm}{}{\fn{\atrmm} \cup \fn{\btrmm}}
\\[1ex]
   \rcategoryalt{\atrmms,\btrmms}{Vector of finite distributions of terms }{}
\\[1ex]
    \rcategoryalt{\atrmset,\btrmset}{Sets of terms}{}
\end{rdisplaytab}
$|\atrmms|$ is the length of the vector.   We use $\atrmms\PBR{i}$ for the $i$'th term of $\atrmms$ when $1 \leq i \leq |\atrmms|$.   

$\fn{\atrmm}$ is the set of free names of $\atrmm$.  As usual, we consider terms upto renaming of bound variables.

$\prob{\atrmm}{\btrmm} = \aprob$ if $\pterm{\aprob}{\btrmm} \in \atrmm$, and $0$ otherwise.  $\probfull{\atrmm}$ is the full weight of $\atrmm$.  We will only consider $\atrmm$ such that $\probfull{\atrmm} \leq 1$.   

Let $\aprob \times \btrmm = \CBR{   \pterm{\aprob \times \bprob_i}{\atrm_i} \mid \pterm{\bprob_i}{\atrm} \in \btrmm}$
If $\sum_i \aprob_i \leq 1$, then the {\em sub-convex combination} $\CBR{ \aprob_i  \times \atrmm_i}$ is a valid multiset of terms.  An important special case is written $\union{\atrmm}{\btrmm}$, that stands for for the weighted term such that $\prob{\union{\atrmm}{\btrmm}}{\ctrmm}= \prob{\atrmm}{\ctrmm} + \prob{\btrmm}{\ctrmm}$, if it exists.  Traditional probabilistic choice,  written $\union{\pterm{0.5}{\atrm}}{\pterm{0.5}{\btrm}}$, thus stands for $\{ \pterm{0.5}{\atrm}, \pterm{0.5}{\btrm} \}$.  

The above grammar only generates finite distributions.  The forthcoming reduction semantics can generate countable distributions; we will also use $\atrmm,\btrmm$ for the countable distributions generated by the reduction semantics.  

\begin{definition}[Ordering distributions]
$\atrmm \leq \btrmm\ \defeq\  (\exists \ctrmm)  \ [ \btrmm = \union{\atrmm}{\ctrmm}]$.
\end{definition}
Thus, if $\atrmm \leq \btrmm$, then $(\forall \pterm{\aprob}{\atrm} \in \atrmm) \ (\exists \pterm{\bprob}{\atrm} \in \btrmm) \ \aprob \leq \bprob$.    Under this order, distributions with countable support form a bounded complete, continuous dcpo.  Bounds do not exist only when total weight adds up to more than $1$.  The least upper bounds of bounded sets, written $\bigsqcup \atrmm_i$, given by $\CBR{\pterm{\sup\ \aprob_i}{\atrm} \mid \pterm{\aprob_i}{\atrm} \in \atrmm_i}$ and a countable basis given by distributions with finite support and rational weights, with the ``way-below'' relation given by:
\[\atrmm \ll \btrmm\  \defeq\  |\atrmm| \text{ finite },  (\forall \pterm{\aprob}{\atrm} \in \atrmm) \ (\exists \pterm{\bprob}{\atrm} \in \btrmm) \ \aprob < \bprob 
\]
   

We will often avoid some explicit coercions between terms and their associated point measures, \eg, we will often write $\CBR{\pterm{1}{\atrm}}$ simply as $\atrm$.

\subsection{Reduction semantics}
We consider a lazy reduction strategy to weak head normal forms.  For pure lambda terms, this reduction strategy corresponds to Levy-Longo trees.   

The \emph{evaluation
  contexts} are defined as follows.
\begin{displaymath}
  \actx,\bctx\BNFDEF
  \ahole \BNFSEP \actx\A\atrmm
\end{displaymath}

The small step evaluation relation $\atrmm \eval{} \btrmm$ below tracks the full ensemble of possible results.   The evaluation  relation ensures that application is left linear.  Since the evaluation is call-by-name, any choices in the argument are resolved separately and independently at the point of use of the argument.   We write $\subst{\atrmm}{\avar}{\btrmm}$ for the usual capture free substitution lifted to sets.   
\begin{scope}
  \begin{displaytab}[\renewcommand{\displayratio}{.33}]{Evaluation
      \;\;$(\atrmm \teval{} \btrmm)$.
  $\evals{}$: transitive closure of $\eval{}$.
    }
    \rinfer{\abs{\avar}{\atrmm} \A \btrmm
      \teval{}
      \subst{\atrmm}{\avar}{\btrmm}
    }   
    \\[.5ex]
    \linfer{}{\atrmm \teval{} \atrmm'}
       {\actx[\atrmm] \teval{} \actx[\atrmm']}
   \\[.5ex]
    \rinfer{\atrmm \A \btrmm
      \eval{}
      \biguplus_{\pterm{\aprob}{\atrm} \in \atrmm} \CBR{\pterm{\aprob}{\atrm \A \btrmm}}
    }   
    \\[.5ex]
    \linfer{}{\atrm \eval{} \btrmm}
     {\union{\pterm{\aprob}{\atrm}}{\atrmm}
      \eval{}
      \union{\PBR{\aprob \times \btrmm}}{\atrmm}
      }
  \end{displaytab}
\end{scope}

We summarize some properties of $\eval{}$.  
\begin{itemize}
\item $\eval{}$ is one-step Church Rosser since there are no critical pairs.  
\item If $\atrmm \leq \btrmm$ and $\atrmm \eval{} \atrmm'$, then there exists $\btrmm'$ such that $\btrmm \eval{} \btrmm'$ and $\atrmm' \leq \btrmm'$
\item Let $\atrmm_i$ be a directed set.  Let $\btrmm_i$ be a directed set such that $\atrmm_i \eval{}{} \btrmm_i$.   Then, $\bigsqcup \atrmm_i \eval{} \bigsqcup \btrmm_i$.
\item $\eval{}$ is linear wrt $\uplus$.  If $\atrmm = \biguplus_i \atrmm_i$ and $ (\forall i)  \atrmm_i  \evals{} \btrmm_i$, then $\atrmm\ \evals{} \biguplus_i \btrmm_i$.  
\end{itemize}
$\evals{}$ inherits the linearity and monotonicity  properties from $\eval{}$. 

\begin{definition}[Weak head normal forms]
The weak head normal forms are of the form $\pterm{\aprob}{\abs{\avar}{\atrmm}}$ and $\pterm{\aprob}{\avar \A \atrmms}$.
\end{definition}
We use $\vals{\atrmm}$ for the sub distribution of $\atrmm$ with support only on the (weak head) normal forms in $\atrmm$.   
\[ \vals{\atrmm} =  \CBR{ \pterm{\aprob}{\atrm} \in \atrmm \mid \atrm \mbox{ in } \whnf{}} \]
$\vals{}$ is monotone: if $\atrmm \eval{} \btrmm$, then $\vals{\atrmm} \leq\vals{\btrmm}$.   The result of big-step evaluation is a distribution on normal forms.     
\begin{definition}[Big step evaluation] 
$\whnf{\atrmm} = \bigsqcup \CBR{ \vals{\ctrmm} \mid \atrmm \evals{} \ctrmm}$.  
\end{definition}

This evaluation might be necessarily infinite, as shown by the following example, reproduced from the introduction.  
\begin{example}\label{infinite}
Let $\ycomb$ be a fixpoint combinator such that $\ycomb \A \atrm\ \evals{} \atrm \A \PBR{\ycomb \A \atrm}$, $\Id$ be $\abs{\avar}{\avar}$, and let $\atrm = \abs{\avar}{ \CBR{\pterm{0.5}{\Id}, \pterm{0.5}{\avar}}}$.  Then, $\cvg{\ycomb \A \atrm}{\Id}$ but for any finite evolution $\ycomb \A \atrm\ \evals{} \btrmm$, $\vals{\btrmm}$ necessarily assigns a probability strictly less than $1$ to $\Id$.
\end{example}
Following ~\citet{10.1145/3350618}, we generally use big-step evaluation for definitions and small-step evaluation for proofs.  

We summarize the linearity and monotonicity properties of big-step evaluation.
\begin{itemize}
\item If $\atrmm \leq \btrmm$,  then $\whnf{\atrmm} \leq \whnf{\btrmm}$.  
\item Let $\atrmm_i$ be a directed set.  Then, $\whnf{\bigsqcup \atrmm_i} = \bigsqcup \whnf{\atrmm_i}$.
\item If $\atrmm = \biguplus_i \atrmm_i$, then $\whnf{\atrmm} = \biguplus_i \whnf{\atrmm_i}$.
\end{itemize}

\section{Open Simulation}

\subsection{LTS basis for open simulation}

We present an LTS on terms as a prelude to defining a notion of simulation on terms, building on~\citet{LASSEN1999346}'s presentation of ~\citet{SANGIORGI1994120}'s treatment.  In particular, we follow the ``open bisimulation'' presentation of Lassen's later work~\citep{DBLP:conf/lics/Lassen05,DBLP:conf/lics/LassenL08,DBLP:conf/birthday/StovringL09} in a form that validates $\eta$-expansion~\citep{DBLP:journals/taosd/JagadeesanPR09}.

\begin{displaytab}[\renewcommand{\displayratio}{.23}]{LTS
    Labels}
  \hspace{1em} \= \hspace{.10\linewidth-1em} \= \hspace{1em} \= \kill
  \categoryone{\albltau}{}
  {\tau_{} \BNFSEP\albl}{All Labels}
  \\[.5ex]
  \categoryone{\tau}{Silent Label}{}
  \\[.5ex]
  \categoryalt{\albl}{Visible Labels}
  \entry{\ltsfuncall{\asym}{\PBR{i,n}}}{Term uses arg $i$ out of $n$ of context function $\asym$}
  \entry{\ltsapp{\asym}}{Context calls term with argument $\asym$ ($\dn{}=\set{\asym}$)}
  \entry{\ltssignal}{Convergence}
\end{displaytab}

The choice of labels is determined by the possibilities available to the context to interact with the term.   In~\citet{Gordon95}'s terminology, the visible label $\ltsfuncall{}{}$ has an \emph{active} component (representing actions initiated by the term), whereas $\ltsapp{\asym}$ is completely \emph{passive} (representing actions initiated by the environment).  $\ltssignal$ is also passive.  

The silent label $\tau$ stands for internal computation of the term.

\begin{scope}
\begin{displaytab}{LTS on terms \;\; $\ltsstrongtransition{\atrm}{\albltau}{\btrmm}$}
\linfer{SILENT}{
      \atrmm \eval{} \btrmm
    }{
      \ltsstrongtransition{
        \atrmm}{\tau}{
        \btrmm}
    }
  \\[1ex]
  \linfer{RETURN}{}{
      \ltsstrongtransition{\abs{\avar}{\atrm}}
        {\ltsret{\asym}
        }{
        (\abs{\avar}{\atrm}) \A \asym}}
  \\[1ex]
  \linfer{CONVERGE}{}{
      \ltsstrongtransition{\abs{\avar}{\atrm}}
        {\ltssignal}
        {
        \Id}}
  \\[1ex]
  \linfer{CALL}{0 \leq i \leq |\atrmms|}
     {  \ltsstrongtransition{\asym \A \atrmms }
                                   {\ltsfuncall{\asym}{\PBR{i,|\atrmms|}}}
                                  {\atrmms\PBR{i}}
}
\\
\\
\linfer{LIFT-$\albl$}
       {\ltsstrongtransition{\atrm}{\albl}{\btrm}}
       {\ltsstrongtransition{\pterm{\aprob}{\atrm}}{\albl}{\pterm{\aprob}{\btrm}}}
 \\[1ex]
  \end{displaytab}
\end{scope}

The LTS affords priority to the  internal reductions of the  term, since the visible transitions  are only applicable to weak-head normal forms.  

$\ltsapp{\asym}$ performs applicative tests.  Rather than providing a term as an argument for the applicative test, this rule provides a
(possibly fresh) symbolic argument $\asym$.   $\ltssignal$ facilitates the measurement of the weight of convergence to an abstraction.  

$\ltsfuncall{\asym}{\PBR{i,n}}$ records the call to an external and unknown parameter $\asym$.  It chooses an argument, the one at the $i$'th position, to inspect further.  The addition of parameter $n$ provides a way to distinguish argument vectors of different lengths.   We adopt the convention that $\atrmms\PBR{0}$  is $\Id$.  Mirroring $\ltssignal$, the label $\ltsfuncall{\asym}{\PBR{0,n}}$ is a way to measure the weight of convergence to an head application of $\asym$ to $n$ arguments.  

Notice that the \RN{lts} is being defined on terms rather than formal sums of terms.  This is because of a vexing difference between the usual weak-head normal forms $\abs{\avar}{\atrmm}$ and the ones introduced by open application $\asym \A \atrmms$, that prevents us 
from following the correct approach of directly building a labeled Markov chain by lifting these LTS transitions.

\begin{example}[Linearity and Weak head normal forms]\label{lis}
Application is linear in the head variable.  This is reflected in a candidate \RN{LTS} extension to formal sums of abstractions as follows:
\[\frac{\atrm_i = \abs{\avar}{\atrmm_i}, \ \ltsstrongtransition{\atrm_i}{\albl} {\btrmm_i}}{\ltsstrongtransition{\union{\atrmm}{\btrmm}}{\albl}{\uplus_i \btrmm_i}}
\]
This rule permits us to show that abstraction is linear wrt formal sums.  \ie.
\[ \abs{\avar}{\union{\atrmms}{\btrmms}} \bisimreln\ \union{\abs{\avar}{\atrmms}}{\abs{\avar}{\btrmms}} \]
where the matching of transitions of the left by the right makes essential use of the lifting of transitions on terms to transitions on formal sums of terms.

Alas, this lifting is not sound for formal sums of the form $\union{\asym \A \atrmms}{\asym \A \btrmms}$.  To see this, consider $\union{(\avar \A \tru\ \fls)}{(\avar \A \fls\ \tru)}$ and $\union{(\avar \A \fls\ \fls)}{(\avar \A \tru\ \tru)}$, where $\tru, \fls$ are the Church booleans.    Setting $\avar$ to be the Church terms for exclusive-OR differentiates them.  However, both formal sums have transitions with labels $\ltsfuncall{\avar}{\PBR{i,|2|}}$ to $\union{\tru}{\fls}$ for $i \in \{1,2\}$, if we  permitted unrestricted linearity of transition system.  
\end{example}

\subsection{Probabilistic lifting}
In order to establish the proper foundations for our forthcoming definition of simulation, we take a short detour into lifting of relations to probability distributions.   We try to make the material in this subsection self-contained.  

Let $\areln$ be a binary relation on a countable $\cset$.  We use $\adist,\bdist$ for discrete sub probability distributions, and $|\adist|, |\bdist|$ for their carrier sets.   We write $\sum_i \aprob_i \times \aElt_i$ for the sub-convex combination of point measures at $\aElt_i$.   Lift binary relations on $\cset$ are to probability distributions as follows.    
\begin{definition}[Probabilistic lift]
Let $\adist = \sum_i \aprob_i \times \aElt_i,\bdist = \sum_j \bprob_j \times \bElt_j$.  Then:
$\adist\   \close{\areln} \ \bdist$ if there is a matching $\fm_{i,j}$ such that:
\begin{align*}
\fm_{i,j} \neq 0 &\Rightarrow \aElt_i\ \areln\ \bElt_j \\
\sum_i \fm_{i,j} &= \aprob_i  \\
\sum_j \fm_{i,j} &\leq \bprob_j 
\end{align*}
\end{definition}
In contrast to the definitions of~\citet{LS,10.1109/LICS.2007.15}, the inequality in the last line above permits us to address subprobabilities.

The following lemma shows that $\close{\areln}$ can be understood as a Kleisli construction on the  probabilistic power domain of~\citet{10.5555/77350.77370}.  This``splitting lemma'' for (countable) probability distributions is a small generalization from the finite in Chapter 4 of~\citet{10.5555/77350.77370}.  
\begin{lemma}\label{split}
$\adist\   \close{\areln} \ \bdist$ iff for all $\csub \subseteq\ |\adist|$ such that
$(\csub\ \SEMI\ \areln) \bigcap\ |\adist| = \csub$, it is the case that:
\[ \sum_{\aElt_k \in\ \csub} \aprob_k  \leq \sum_{\bElt_j \in |\bdist|} \bprob_j \mid  \bElt_j \in\ \csub \SEMI\ \areln \]
\end{lemma}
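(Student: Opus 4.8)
The plan is to recognize this ``splitting lemma'' as the standard max-flow/min-cut (equivalently, a measure-theoretic Hall/Strassen) characterization of when a transportation plan between the sub-distributions $\adist$ and $\bdist$ exists, and to prove the two implications separately: the ``only if'' direction by a direct calculation valid for every source set, and the ``if'' direction by first settling the finite case and then lifting to the countable case by a limit argument. Throughout I read the matching $\fm_{i,j}$ as a coupling supported on the graph of $\areln$ whose source marginal exhausts $\adist$ (an equality) and whose target marginal is dominated by $\bdist$ (an inequality, as is appropriate for sub-probabilities).

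For the forward direction, suppose $\adist\ \close{\areln}\ \bdist$ is witnessed by $\fm$, and fix an arbitrary $\csub\subseteq |\adist|$ (the closure hypothesis is not needed here). Since $\fm_{k,j}\neq 0$ forces $\aElt_k\ \areln\ \bElt_j$, every unit of mass leaving a source $\aElt_k\in\csub$ must land on some $\bElt_j\in\csub\ \SEMI\ \areln$; summing the source-marginal identity over $\csub$ and then bounding by the target-marginal inequality gives
\[
\sum_{\aElt_k \in \csub} \aprob_k
= \sum_{\aElt_k \in \csub}\sum_{j}\fm_{k,j}
\le \sum_{\bElt_j \in\, \csub\,\SEMI\,\areln}\sum_{i}\fm_{i,j}
\le \sum_{\bElt_j \in\, \csub\,\SEMI\,\areln}\bprob_j ,
\]
which is exactly the claimed inequality, and it holds a fortiori for the closed $\csub$.

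For the backward direction I would first treat the case where $|\adist|$ and $|\bdist|$ are finite, which is precisely the finite splitting lemma of~\citet{10.5555/77350.77370}; I would re-derive it through max-flow/min-cut to expose where the inequality comes from. Build the capacitated bipartite network with a source $s$, a sink $t$, arcs $s\to\aElt_i$ of capacity $\aprob_i$, arcs $\bElt_j\to t$ of capacity $\bprob_j$, and infinite-capacity arcs $\aElt_i\to\bElt_j$ whenever $\aElt_i\ \areln\ \bElt_j$. A coupling that exhausts $\adist$ is exactly a (real-valued) flow of value $\probfull{\adist}$, which by the max-flow/min-cut theorem exists iff every cut has capacity at least $\probfull{\adist}$. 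A finite cut never severs an infinite arc, so it is determined by the set $\csub$ of sources left on the $s$-side, and its capacity is
\[
\sum_{\aElt_i \notin \csub}\aprob_i \;+\; \sum_{\bElt_j \in\, \csub\,\SEMI\,\areln}\bprob_j ,
\]
so requiring it to be $\geq\probfull{\adist}=\sum_i\aprob_i$ is precisely the stated mass inequality for $\csub$. The role of the closure hypothesis is that it isolates the source-sides of the binding (minimal) cuts: an arbitrary source set is dominated, via a saturation operation that enlarges it to the largest source set with the same $\areln$-image (raising the source mass while not raising the image mass), by one in the admissible closed family, so verifying the inequality on the closed sets already certifies every cut and hence feasibility.

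The remaining, and genuinely delicate, step is the countable case, and I expect it to be the main obstacle. Here I would exhaust $|\adist|$ by an increasing chain of finite subsets, restrict to finite sub-distributions $\adist_n\nearrow\adist$, check that the hypothesis for $\adist$ transfers to each $\adist_n$, and obtain finite couplings $\fm^{(n)}$ from the finite case. Each $\fm^{(n)}$ lives in the compact product space $\prod_{i,j}[0,\min(\aprob_i,\bprob_j)]$, so a subnet converges coordinatewise to some $\fm$; the support condition and the target-marginal inequality pass to the limit immediately by lower semicontinuity. The one subtle point is that the source marginal must remain an \emph{equality} in the limit, i.e.\ that no mass escapes to infinity; I would secure this from uniform tightness, which is exactly where the sub-probability constraint $\probfull{\adist}\le 1$ is used, giving that the total transported mass is preserved and hence that each $\sum_j\fm_{i,j}=\aprob_i$ survives. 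This tightness-driven mass preservation is the content of the ``small generalization'' over the finite lemma; an alternative, staying closer to the paper's domain theory, would build $\fm$ as a directed supremum of finite partial couplings in the continuous dcpo of distributions rather than by compactness.
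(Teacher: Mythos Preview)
Your forward implication is correct and is essentially what the paper means when it says the flow ``validates the criterion''. The routes diverge on the converse. The paper does not reduce to the finite case and then take a limit: it builds your bipartite network once, with countably many vertices, and invokes directly the countable max-flow/min-cut theorem of \citet{AHARONI20111}, which guarantees an orthogonal flow--cut pair in the infinite network. If the cut consists of all source edges, the orthogonal flow saturates them and is the required matching; otherwise the source side of the cut is a set $\csub$ witnessing failure of the inequality. So the countable case is handled in one stroke, at the price of importing a nontrivial infinitary combinatorial theorem.

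Your finite-plus-limit programme can be completed, and is then more elementary than the paper's, but two steps as written do not go through. First, you prove the finite splitting lemma only when both carriers are finite, yet you apply it to $(\adist_n,\bdist)$ with $|\bdist|$ still countable; you need an extra reduction here, e.g.\ collapse the targets into the $2^{|\adist_n|}$ classes determined by their $\areln$-neighbourhoods in $|\adist_n|$ before appealing to the genuinely finite theorem. Second, the tightness preventing the source marginal from leaking is \emph{not} supplied by $\probfull{\adist}\le 1$ as you claim: a bound on total source mass says nothing about where the rows $\fm^{(n)}_{i,\cdot}$ live, and one can arrange them to drift to infinity if that were the only constraint. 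What actually works is the pointwise domination $\fm^{(n)}_{i,j}\le \bprob_j$ from the target-marginal inequality together with $\sum_j \bprob_j\le 1$; dominated convergence for the counting measure in $j$ then carries $\sum_j \fm^{(n)}_{i,j}=\aprob_i$ through the subnet limit. With these two fixes your argument is sound.
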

\begin{proof}
We follow  Chapter 4 of~\citet{10.5555/77350.77370} essentially verbatim, using the countable version of MaxFlow-MinCut  from~\citet{AHARONI20111}; in any (possibly infinite) network there exists an orthogonal pair of a flow and a cut\footnote{A flow and a cut are orthogonal if the flow exhausts the capacity of the edges that go forward in the cut, even as it assigns $0$ flow to the edges that flow back in the cut.}. We use this result for very simple countably infinite directed bipartite graph with {\em no} infinite paths.

The vertices  of the graph are given by $\CBR{\src,\sink} \cup\ |\adist| \cup |\bdist|$.  $\src$ is connected to $\aElt_i$ with capacity $\aprob_i$. $\bElt_i$ is connected to $\sink$ with capacity $\bprob_i$.  $\aElt_i$ is connected to $\bElt_j$ with a very high capacity, say $100$.   

Consider \PBR{\aFlow,\aCut},  the orthogonal pair of a  flow and a cut in this graph.  

If all the edges from the $\src$ are in \aCut, \aFlow\ validates the criterion, yielding the result for the forward direction.  

If not.  Since the capacity of edges from $|\adist|$ into $|\bdist|$ is large, they cannot be exhausted.   So, \aCut\ consists only of edges from the $\src$ or into the $\sink$.  The set of vertices connected to $\src$ in \aCut\ provides evidence of violation of the criterion. 
\end{proof}
In the forthcoming development, we use the matching definition in the precongruence proof and the characterization of lemma~\ref{split} almost everywhere else.  

Since $\areln\ \SEMI\ \bigcap_i\ \areln_i = \bigcap_i \areln\ \SEMI\ \areln_i$, we deduce:
\begin{corollary}\label{omegaint}
\[ \close{\bigcap_i\ \areln_i} = \bigcap_i \ \close{\areln} \]
\end{corollary}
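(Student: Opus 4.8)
The plan is to prove the two inclusions separately, the left-to-right one being purely formal and the right-to-left one carrying all the content. For the inclusion $\close{\bigcap_i \areln_i} \subseteq \bigcap_i \close{\areln_i}$ I would simply invoke monotonicity of the lift in its underlying relation: a matching $\fm$ witnessing $\adist\ \close{\areln}\ \bdist$ obeys $\fm_{i,j} \ne 0 \Rightarrow \aElt_i\ \areln\ \bElt_j$, so whenever $\areln \subseteq \areln'$ the very same $\fm$ witnesses $\adist\ \close{\areln'}\ \bdist$, since the marginal constraints $\sum_i \fm_{i,j} = \aprob_i$ and $\sum_j \fm_{i,j} \le \bprob_j$ are untouched and the support constraint only weakens. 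As $\bigcap_i \areln_i \subseteq \areln_j$ for each $j$, this gives $\close{\bigcap_i \areln_i} \subseteq \close{\areln_j}$ for every $j$, hence containment in the intersection. This direction requires no hypothesis on the family.

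For the reverse inclusion I would route through the characterization of Lemma~\ref{split}. Suppose $\adist\ \close{\areln_i}\ \bdist$ for every $i$. Reading a witnessing matching off each hypothesis yields, directly from the marginal constraints, the mass inequality $\sum_{\aElt_k \in \csub} \aprob_k \le \sum_{\bElt_j \in \csub\SEMI\areln_i} \bprob_j$ for \emph{every} $\csub \subseteq |\adist|$ and every $i$, not merely the closed ones. Now fix a $\csub$ that is closed for $\areln := \bigcap_i \areln_i$, i.e.\ $(\csub\SEMI\areln)\cap |\adist| = \csub$. The distributivity of composition over the intersection, $\csub\SEMI\bigcap_i\areln_i = \bigcap_i(\csub\SEMI\areln_i)$, exhibits the image $\csub\SEMI\areln$ as the intersection of the descending family of images $\csub\SEMI\areln_i$. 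Applying continuity from above of the finite measure $\bdist$ to that descending family gives $\sum_{\bElt_j \in \csub\SEMI\areln} \bprob_j = \inf_i \sum_{\bElt_j \in \csub\SEMI\areln_i} \bprob_j \ge \sum_{\aElt_k \in \csub} \aprob_k$, which is precisely the inequality Lemma~\ref{split} demands of the closed set $\csub$. Since this holds for all closed $\csub$, the lemma returns $\adist\ \close{\bigcap_i\areln_i}\ \bdist$.

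The \emph{main obstacle} is exactly the distributivity identity that the preceding sentence in the paper takes as given: $\areln\SEMI\bigcap_i\areln_i = \bigcap_i(\areln\SEMI\areln_i)$ has only the trivial inclusion $\subseteq$ in general, and the reverse genuinely fails for unrelated relations. I would secure it from the structure that names the corollary, namely that the $\areln_i$ form a descending $\omega$-chain of simulation approximants, together with finite branching: if $\bElt \in \bigcap_i(\csub\SEMI\areln_i)$, then for each $i$ some $c_i \in \csub$ is $\areln_i$-related to $\bElt$, and if only finitely many elements of $\csub$ are $\areln_1$-related to $\bElt$, a pigeonhole (König) argument selects a single witness serving cofinitely many indices, which by the descending-chain property then serves all of them, placing $\bElt$ in $\csub\SEMI\bigcap_i\areln_i$.

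I would finally make explicit the one ingredient the paper leaves implicit, that $\inf_i \sum_{\bElt_j \in \csub\SEMI\areln_i}\bprob_j = \sum_{\bElt_j \in \bigcap_i(\csub\SEMI\areln_i)}\bprob_j$ is continuity from above, where the bound $\probfull{\bdist} \le 1$ on the total weight is used; for distributions of finite support, as in the countable basis, this is automatic, since a descending chain of subsets of a finite support stabilizes. With the chain hypothesis and this continuity in hand, the two inclusions combine to the claimed equality.
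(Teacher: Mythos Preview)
Your route is exactly the paper's: the sentence preceding the corollary asserts the distributivity identity $\areln\SEMI\bigcap_i\areln_i=\bigcap_i(\areln\SEMI\areln_i)$ without argument, and the corollary is meant to drop out of it together with Lemma~\ref{split}. That is the whole of the paper's proof.

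You go well beyond the paper in care. You are right that the identity fails for arbitrary families, and in fact the corollary itself is false in that generality: with $\adist=\tfrac12\times a_1+\tfrac12\times a_2$, $\bdist=\tfrac12\times b_1+\tfrac12\times b_2$, $\areln_1=\{(a_1,b_1),(a_2,b_2)\}$ and $\areln_2=\{(a_1,b_2),(a_2,b_1)\}$, one has $\adist\ \close{\areln_j}\ \bdist$ for both $j$ yet $\bigcap_j\areln_j=\emptyset$. The paper is tacitly relying on the descending $\omega$-chain structure of its only application, to the approximants $\simreln_k$, and your reading of ``the structure that names the corollary'' is the correct one; your explicit appeal to continuity from above is likewise something the paper suppresses but needs. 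The one soft spot in your own repair is the finite-branching premise of the K\"onig step: nothing in the setup bounds how many elements of $\csub\subseteq|\adist|$ are $\areln_1$-related to a given $\bElt$, and $|\adist|$ may be countably infinite. For the descending-chain case you can close this either by a compactness argument on the space of matchings, or---more in the spirit of the surrounding development---by first treating finitely supported $\adist$ (where your pigeonhole does work) and then invoking Corollary~\ref{lldist} to reach the general case.
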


Let $\sum_i \aprob'_i \times \aElt_i \leq\ \sum_i \aprob_i \times \aElt_i$  if $\aprob'_i < \aprob_i$ for all $i$.  It is immediate that:
\begin{corollary}\label{lldist}
\[ \adist\   \close{\areln} \ \bdist\ \Longleftrightarrow\ (\forall \adist' \ll \adist) \adist'\  \close{\areln} \ \bdist   \]
\end{corollary}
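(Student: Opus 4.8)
The plan is to reduce both implications to the splitting characterization of Lemma~\ref{split}, so that the set-valued matching conditions become a single family of numerical inequalities, one for each $\areln$-closed test set $\csub \subseteq |\adist|$ (those with $(\csub\ \SEMI\ \areln)\bigcap |\adist| = \csub$). The structural fact that makes this clean is that $\adist' \ll \adist$ forces $|\adist'| = |\adist|$: the approximants share the support of $\adist$, so the admissible test sets $\csub$ and their images $\csub\ \SEMI\ \areln$ are literally identical for $\adist$ and for every $\adist'$, and only the left-hand masses $\sum_{\aElt_k \in \csub}\aprob_k$ vary. This reduces the corollary to the behaviour of those masses under the approximation.

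For the forward direction I would argue by monotonicity. Given $\adist\ \close{\areln}\ \bdist$ and any $\adist' \ll \adist$, Lemma~\ref{split} yields $\sum_{\aElt_k \in \csub}\aprob_k \leq \sum_{\bElt_j \in \csub\ \SEMI\ \areln}\bprob_j$ for every closed $\csub$; since $\aprob'_k < \aprob_k$ at each support point we get $\sum_{\aElt_k \in \csub}\aprob'_k \leq \sum_{\aElt_k \in \csub}\aprob_k$, so the same inequality holds with $\aprob'$ in place of $\aprob$, and Lemma~\ref{split} read in the other direction gives $\adist'\ \close{\areln}\ \bdist$. Equivalently, and more transparently, I can rescale the witnessing matching directly: from a matching $\fm_{i,j}$ for $\adist\ \close{\areln}\ \bdist$ set $\fm'_{i,j} = (\aprob'_i/\aprob_i)\,\fm_{i,j}$, which preserves the support constraint, satisfies $\sum_j \fm'_{i,j} = \aprob'_i$, and only shrinks the column sums, so $\sum_i \fm'_{i,j} \leq \bprob_j$ survives.

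The backward direction carries the content, and it is an approximation argument. Fix an $\areln$-closed $\csub \subseteq |\adist|$. The key observation is that $\adist_\epsilon \defeq (1-\epsilon)\adist$ is a legitimate approximant with $\adist_\epsilon \ll \adist$ for every $\epsilon \in (0,1)$, since $(1-\epsilon)\aprob_i < \aprob_i$ at each support point while the support is unchanged. By hypothesis $\adist_\epsilon\ \close{\areln}\ \bdist$, so Lemma~\ref{split} gives $(1-\epsilon)\sum_{\aElt_k \in \csub}\aprob_k \leq \sum_{\bElt_j \in \csub\ \SEMI\ \areln}\bprob_j$ for all $\epsilon > 0$. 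Letting $\epsilon \to 0^{+}$ recovers $\sum_{\aElt_k \in \csub}\aprob_k \leq \sum_{\bElt_j \in \csub\ \SEMI\ \areln}\bprob_j$, and since $\csub$ was arbitrary among $\areln$-closed subsets, Lemma~\ref{split} delivers $\adist\ \close{\areln}\ \bdist$.

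The main obstacle I anticipate is exactly this passage to the limit in the backward direction: because the carriers are only countable, one cannot a priori glue the family of matchings witnessing $\adist_\epsilon\ \close{\areln}\ \bdist$ into a single matching for $\adist$ without a compactness argument. The splitting lemma is precisely what sidesteps this, trading the combinatorial matching for a numerical inequality that is continuous in the left-hand masses; the uniform rescaling $\adist_\epsilon = (1-\epsilon)\adist$ then makes the required supremum $\sup_{\epsilon}(1-\epsilon)\sum_{\aElt_k \in \csub}\aprob_k = \sum_{\aElt_k \in \csub}\aprob_k$ completely elementary, avoiding the finite-truncation bookkeeping that a more restrictive, finite-support notion of $\ll$ would have forced.
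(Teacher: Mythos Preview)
Your proof is correct and follows the route the paper intends: the corollary is stated as ``immediate'' right after the local definition of $\ll$ on distributions, and the intended justification is exactly the reduction to the numerical inequalities of Lemma~\ref{split}, which you carry out in full. One small remark: your claim that $\adist' \ll \adist$ forces $|\adist'| = |\adist|$ is stronger than the paper's definition strictly gives (nothing forbids some $\aprob'_i = 0$), but this is irrelevant to your argument, since the rescaling witness in the forward direction works for any subsupport, and in the backward direction your chosen approximants $(1-\epsilon)\adist$ do have identical support.
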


\subsection{Simulation via coinduction}

$\lnf \subseteq \Lambda$ is the subset of weak-head normal forms. $\lnfa$ restricts to formal sums of only abstractions.  $\lnfo$ restricts to  point formal sums of open applications.
\begin{definition}[$\lnf$]
\begin{align*}
\atrmm \in \lnf &\defeq\ [ \atrmm \evals{} \btrm  \Rightarrow\ \atrmm = \btrm ] \\
\atrmm \in \lnfa &\defeq\ [ \forall \pterm{\aprob}{\atrm} \in \atrmm, \atrm \ \mbox{is an abstraction}] \\
\atrmm \in \lnfo &\defeq\ [ (\exists \aprob, \atrmms, \avar) \ \atrmm = \pterm{\aprob}{\avar \A\ \atrmms} ] \\
\relspace &\defeq (\lnfa \times \lnfa) \bigcup\ (\lnfo \times \lnfo)
\end{align*}
\end{definition}
The very type of $\relspace$ addressses the linearity issue of example~\ref{lis}. 

\begin{definition}[Weak max transitions: $\weakmaxtransition{\albltau}{}: \subseteq \lnf \times \lnf$]
\item $\atrmm \weakmaxtransition{\tau }{} \whnf{\atrmm}$.
\item $\atrmm \weakmaxtransition{\albl}{} \ctrmm$, if  $ \atrmm \strongtransition{\albl}{} \btrmm \weakmaxtransition{\tau }{}\ctrmm $
\end{definition}  
The unique target of $\weakmaxtransition{\albltau}{}$ exemplifies the probabilistic determinacy of our language. 

We consider binary relations of terms that have a kernel in $\relspace$.
\begin{definition}
$\rels$ is the set of relations $\areln$ over $\Lambda$ that are induced by a kernel in $\relspace$ as follows:
\[ \areln = \CBR{\PBR{\atrmm,\btrmm} \mid\ \PBR{\whnf{\atrmm},\whnf{\btrmm}} \in\ \close{\areln \cap\ \relspace}} \]
\end{definition}
$\rels$ is a complete lattice under subset with maximum element induced by $\relspace$ and arbitrary least upper bounds induced by the union of their  kernels.   Thus, our definition of simulation fits in the framework described by~\citet{10.1145/2933575.2934564} for simulation relations and upto-reasoning.
\begin{definition}[Simulation Functional] \hfill \\
Define a monotone operator $\simfn{}$ on $\rels$ as follows.  Let $\areln \in \rels$. Let $\PBR{\atrmm,\btrmm} \in \relspace$.  

$\atrmm\ \simfn{\areln}\ \btrmm$ if whenever $\atrmm \weakmaxtransition{\albltau}{} \atrmm'$ and $ \btrmm \weakmaxtransition{\albltau}{} \btrmm'$,  it is the case that  $ \atrmm'\ \close{\areln} \ \btrmm'$.
\end{definition}
In the above definition, we only define $\simfn{\areln} \subseteq \relspace$, since the extension to $\rels$ is unique.  
There is a maximum simulation, that we write as $\simreln$.  Let $\bisimreln = \simreln \cap \simreln^{-1}$.  $\simreln$ is extended to $\Lambda \times \Lambda$ by extension by closing and reduction.

We list out some useful properties of $\simfn{}$.   As a consequence of the inequality in the definition of $\close{\areln}$, we deduce that:
\[ \leq \SEMI \simfn{\areln} \SEMI  \leq = \simfn{\areln} \]
and
\[\close{\simfn{\areln}} = \simfn{\areln} \]
$\simfn{\areln}$ is transitive if $\areln$ is.  

From lemma~\ref{lldist}, $\simfn{\areln}$ is admissible, in the sense of~\citet{10.5555/218742.218744}, for $\leq$-chains, \ie:
if $\CBR{\atrmm_i}$ is a $\leq$-chain,  and $(\forall i)  \ \atrmm_i\ \simfn{\areln}\  \btrmm$, then $\bigsqcup \CBR{\atrmm_i} \simfn{\areln} \btrmm$.   This provides a finitary operational perspective that is crucial to the forthcoming precongruence proof.  A motivating example is to show that $\Id \simreln\  \ycomb \A \atrm$ from example~\ref{infinite}; notably, none of the finite unwindings of $\ycomb \A \atrm$ suffice to reach the limit $\Id$.   Since every finite approximant to $\atrmm$ can be achieved in a finite computation, admissibility ensures that in order for $\atrmm$ to be simulated by $\btrmm$, it suffices for the finite computations from $\atrmm$ to be  simulated by (finite computations of) $\btrmm$.   

Finally, as a consequence of probabilistic determinacy and lemma~\ref{omegaint}, we deduce that the closure ordinal of $\simreln$ is $\omega$; \ie, let
\[
\begin{array}{lll}
\simreln_0 &=& \Lambda \times \Lambda \\
\simreln_{k+1} &=& \simfn{\simreln_k}
\end{array}
\]
Then, $\simreln = \bigcap_k \simreln_k$.

These properties translate to  $\simreln$ as follows.  
\begin{itemize}
\item $\simreln$ is a preorder with least element $\least$, where $\least$ is any term that does not converge to a weak head normal form, \eg, $\PBR{\abs{\avar}{\avar \A \avar}} \A  \PBR{\abs{\avar}{\avar \A \avar}}$. 
\item $\PBR{} \bisimreln \least$.  In probabilistic programming languages, divergence causes loss of probability.
\item $\leq \SEMI\ \simreln\ \SEMI\ \leq = \simreln$

\item $\atrmm \bisimreln\ \whnf{\atrmm}$;~\citet{10.1145/3350618} call this principle ``simulation up-to lifting''. 
\item Simulation is closed under under sub-convex combinations, \ie,  if $\atrmm_i \simreln \btrmm_i$ forall $i$, then  $\biguplus_i \aprob_i \times \atrmm_i  \simreln \biguplus_i \aprob_i \times \btrmm_i $; thus, $\simreln$ is a precongruence for choice.
\end{itemize}

\subsection{Precongruence proofs}

We will work with substitutions generated by the following grammar. 
\begin{rdisplaytab} [\renewcommand{\displayratio}{.23}]{Substitutions}{}{}
  \rcategoryalt{\asub,\asub'}{Substitutions}{}
      \rentry{\subst{}{\avar}{\atrmm}}{}{}
      \rentry{\asub \A \asub'}{}{}
\end{rdisplaytab}
We will use $\dom{\asub}$ for the domain of a substititution $\asub$.  We write $\rundef{\asub\PBR{\avar}}$ if $\avar \not \in \dom{\asub}$, and $\defined{\asub\PBR{\avar}}$ otherwise.    We will only use substitutions that satisfy the following restriction:
\[ (\forall  \avar \in \dom{\asub}) \ \avar\not \in \bigcup_{\bvar \in \dom{\asub}} \fn{\asub(\bvar)} \]

The subsititutive version of simulation is defined as follows.  
\begin{definition}
$\asub \simreln \bsub$ if $\dom{\asub} = \dom{\bsub}$ and for all $\avar \in \dom{\asub} \ [ \asub\PBR{\avar} \simreln\ \bsub\PBR{\avar} ] $. 

$\substsimreln$ is defined as the smallest relation that satisfies 
for all $\atrm, \atrm',\asub,\bsub$, 
\[ \atrmm \simreln \atrmm', \asub \simreln \bsub \Longrightarrow\ \atrmm \A \asub   \substsimreln \atrmm'  \A \bsub \]
\end{definition}
In particular,  $\simreln \subseteq \substsimreln$.  

In the appendix~\ref{proof}, we prove that $\substsimreln$ is a postfixed point of $\simfn{}$.    
\begin{lemma}\label{substsim}
$\substsimreln  \subseteq \simfn{\substsimreln}$.
\end{lemma}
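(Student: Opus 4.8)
The plan is to verify directly that $\substsimreln$ is a post-fixed point of the monotone functional $\simfn{}$; since $\simreln$ is the greatest such, this yields $\substsimreln \subseteq \simreln$ by coinduction. So I fix a pair $\PBR{\atrmm \A \asub, \atrmm' \A \bsub} \in \substsimreln$ witnessed by $\atrmm \simreln \atrmm'$ and $\asub \simreln \bsub$, and discharge the defining clause of $\simfn{}$: every matching pair of weak-max transitions must land in $\close{\substsimreln}$. Two preliminary reductions make this manageable. First, by the way-below characterization of corollary~\ref{lldist} (operationally underwritten by admissibility of the functional for $\leq$-chains), it suffices to match each finite $\ctrmm' \ll \whnf{\atrmm \A \asub}$; as every such $\ctrmm'$ is realized by a \emph{finite} reduction $\atrmm \A \asub \evals{} \btrmm$ with $\ctrmm' \leq \vals{\btrmm}$, the argument can run by induction on reduction length. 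Second, probabilistic determinacy makes each $\weakmaxtransition{\albltau}{}$ single-valued, so for each label the obligation reduces to a single pair of target distributions.

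The technical heart is a decomposition of $\whnf{\atrmm \A \asub}$. Using that substitution commutes with lazy reduction and that $\whnf{\cdot}$ is linear over $\uplus$ and continuous (the properties summarized after the big-step definition), I would evaluate the skeleton $\atrmm$ to $\whnf{\atrmm}$ first, apply $\asub$, and continue. This turns the matching into a per-normal-form analysis over the support of $\whnf{\atrmm}$, matched against $\whnf{\atrmm'}$ by the flow witnessing $\atrmm \simreln \atrmm'$. An abstraction $\abs{\avar}{\ctrmm}$ stays an abstraction under $\asub$, so its RETURN and CONVERGE transitions are inherited; matching against the primed side uses $\atrmm \simreln \atrmm'$ and the fact that extending $\asub, \bsub$ by a fresh applicative argument $\asym$ (with $\asym \simreln \asym$) stays inside the $\substsimreln$ pattern. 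An open application $\asym \A \ctrmms$ with $\asym \notin \dom{\asub}$ keeps the same head and arity, so its CALL transitions expose arguments $\ctrmms\PBR{i}\A\asub$, again of $\substsimreln$ shape and matched against the primed arguments extracted from the CALL-matching in $\atrmm \simreln \atrmm'$.

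The principal obstacle is the remaining case: an open application $\asym \A \ctrmms$ whose head lies in $\dom{\asub}$. Here applying $\asub$ replaces the frozen head $\asym$ by the genuine term $\asub(\asym)$, yielding $\asub(\asym)$ applied to the substituted arguments $\ctrmms \A \asub$, a term that may reduce arbitrarily far and whose normal form no longer resembles $\atrmm$. The resolution leans on $\asub \simreln \bsub$, hence $\asub(\asym) \simreln \bsub(\asym)$: I would re-present the application as the defining clause of $\substsimreln$ instantiated with $\asub(\asym)$ in head position and the (already $\substsimreln$-related) argument continuations absorbed into a compound substitution, which is exactly why the substitution grammar is closed under $\A$. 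The quantitative side then follows from the splitting lemma~\ref{split}: the flow realizing $\asub(\asym) \simreln \bsub(\asym)$ is composed with the ensuing head reductions and with the argument flows, and lemma~\ref{split}, together with closure of $\close{\cdot}$ under sub-convex combinations, shows the composite weights still satisfy the cut condition, placing the targets in $\close{\substsimreln}$.

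Reassembly is then routine: linearity of $\whnf{\cdot}$ over $\uplus$ sums the per-normal-form matchings into one flow between $\whnf{\atrmm \A \asub}$ and $\whnf{\atrmm' \A \bsub}$ at each finite stage, and admissibility passes to the limit. I expect the genuine difficulty to concentrate in two places: maintaining the freshness/domain restriction on substitutions so that the introduced applicative variables stay outside every domain, and keeping the induction in the head-substitution case well-founded, which I would secure by measuring progress on head reductions of the $\atrmm$-skeleton rather than of the much larger term $\atrmm \A \asub$.
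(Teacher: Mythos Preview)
Your proposal is correct and follows essentially the same route as the paper: reduce to finite reduction prefixes via admissibility, induct on reduction length, and split on whether the head normal form of the skeleton is an abstraction, an open application with a free head, or an open application whose head lies in the substitution's domain; in the last case, re-present the result as a fresh $\substsimreln$ instance by folding the arguments into an extended substitution. The one place you diverge is the induction measure: you anticipate trouble and propose measuring progress on head reductions of the skeleton $\atrmm$, but the paper simply inducts on the length of the full reduction $\atrmm \A \asub \evals{} \etrmm$, which is well-founded without further work since each case (skeleton step, head-substitution step, or the beta that fires when $\asub(\asym)$ is an abstraction) consumes exactly one step of that reduction.
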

\begin{theorem}\label{simthm}[Simulation is a precongruence] \hfill \\
$\simreln$ is a precongruence for all program combinators.
\end{theorem}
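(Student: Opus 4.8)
The plan is to derive the theorem from Lemma~\ref{substsim} by coinduction, and then to read off congruence for each combinator from the resulting identity $\simreln = \substsimreln$. Lemma~\ref{substsim} exhibits $\substsimreln$ as a post-fixed point of the monotone functional $\simfn{}$, i.e.\ $\substsimreln \subseteq \simfn{\substsimreln}$. Because $\simreln$ is by construction the maximum simulation---equivalently, the greatest post-fixed point of $\simfn{}$ in the complete lattice $\rels$---the coinduction principle yields $\substsimreln \subseteq \simreln$. Since the reverse inclusion $\simreln \subseteq \substsimreln$ is already noted above, I conclude $\simreln = \substsimreln$.

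Next I would extract precongruence combinator by combinator from this identity. For choice there is nothing to do, since closure of $\simreln$ under sub-convex combinations is already established. For application, the generating clause of $\substsimreln$ supplies closure under applying related arguments along the spine $\asub \simreln \bsub$: from $\atrmm \simreln \atrmm'$ and $\btrmm \simreln \btrmm'$ it gives $\atrmm \A \btrmm \substsimreln \atrmm' \A \btrmm'$, whence $\atrmm \A \btrmm \simreln \atrmm' \A \btrmm'$ by the identity. For abstraction, assume $\atrmm \simreln \atrmm'$ and compare $\abs{\avar}{\atrmm}$ with $\abs{\avar}{\atrmm'}$ against $\relspace$ (both lie in $\lnfa$): the $\ltssignal$ (CONVERGE) transitions match trivially, each stepping with full weight to $\Id$, while the RETURN transition on a fresh $\asym$ $\beta$-reduces to $\subst{\atrmm}{\avar}{\asym}$ versus $\subst{\atrmm'}{\avar}{\asym}$; these are related because $\simreln = \substsimreln$ is closed under the substitution $\subst{}{\avar}{\asym}$ (an instance of $\asub \simreln \bsub$ with identical, reflexively related components). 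Precongruence in an arbitrary program context then follows by induction on the context from these base cases.

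I expect the genuine difficulty to be concentrated in Lemma~\ref{substsim}, which I here take as given. Establishing that $\substsimreln$ is a post-fixed point requires, for each pair $\atrmm \A \asub \substsimreln \atrmm' \A \bsub$ and each weak-max transition $\weakmaxtransition{\albltau}{}$, producing a matching $\fm_{i,j}$ that witnesses $\close{\substsimreln}$ of the targets. Three features make this delicate: (i) the probabilistic matching is best assembled through the splitting characterization of Lemma~\ref{split}, which turns the coinductive obligation into a flow/cut inequality; (ii) the two shapes of normal form must be kept apart exactly as the typing of $\relspace$ demands, since the naive linear lifting is unsound on open applications (Example~\ref{lis}); and (iii) the CALL case, where an unknown context variable is fed related arguments drawn from $\asub$ versus $\bsub$, is precisely where the substitutive closure earns its keep, and the necessarily-infinite convergence of Example~\ref{infinite} must be absorbed using admissibility for $\leq$-chains (Corollary~\ref{lldist}) together with the $\omega$ closure ordinal (Corollary~\ref{omegaint}).
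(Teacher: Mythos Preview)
Your proposal is correct and follows essentially the paper's approach: use Lemma~\ref{substsim} and coinduction to obtain $\substsimreln \subseteq \simreln$, then read off congruence for each combinator (the paper, like you, treats abstraction and $\uplus$ as immediate from the definitions). One small refinement for the application case: the generating clause of $\substsimreln$ is literally about \emph{substitution} $\atrmm\A\asub$, not term application, so your sentence ``the generating clause \ldots gives $\atrmm\A\btrmm \substsimreln \atrmm'\A\btrmm'$'' needs the instantiation the paper spells out---take the open term $\avar\A\bvar$ together with substitutions $\asub_i$ sending $\avar\mapsto\atrmm_i$ and $\bvar\mapsto\btrmm_i$, so that $(\avar\A\bvar)\A\asub_i = \atrmm_i\A\btrmm_i$ and the clause applies.
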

\begin{proof}
The proof follows from definitions for abstraction and $\uplus$.  For application, we are given $\atrmm_i, \btrmm_i$ for $i=1,2$ with $\atrmm_1\ \simreln\ \atrmm_2$ and $\btrmm_1\ \simreln\ \bsub_2$.   Use above lemma~\ref{substsim} with the term $\avar \A \bvar$ and substitutions $\asub_i$, for $i=1,2$ such that $\asub_i\PBR{\avar} = \atrmm_i, \asub_i\PBR{\bvar} = \btrmm_i$.
\end{proof}

 \section{Building a Polish space}

We adapt the definition of ``compact'' trees of~\citet{DBLP:conf/icalp/Ong92} to a setting with weighted terms, by incorporating distributions with finite support .  

\begin{rdisplaytab}[\renewcommand{\displayratio}{.23}]{ $\lfin$}{}{}
  \rcategoryset{\aprob}{Rational Probabilities in $\SBR{0 \ldots 1}$}{}
  \\[1ex]
   \rcategoryalt{\ctrmm,\dtrmm}{Finite Distributions of terms}{}
   \rentry{\CBR{\pterm{\aprob}{\ctrm}}}{}{}
   \rentry{\union{\ctrmm}{\dtrmm}}{}{}
\\[1ex]
 \rcategoryalt{\atrm}{Terms}{}
  \rentry{\Omega}{}{}
  \rentry{\abs{\avar}{\atrm}}{}{}
  \rentry{\bvar \A \ctrmms}{}{}
\\[1ex]
   \rcategoryalt{\ctrmms}{Finite vector of finite distributions of terms}{}
\end{rdisplaytab}
When restricted to Longo trees, the definition coincides with~\citet{DBLP:conf/icalp/Ong92}, also Chapter~2 of~\citet{Ong88}.  In this section, we will restrict the use of $\ctrmm, \dtrmm, \ldots $ for elements of $\lfin$, whereas $\atrmm,\btrmm, \ldots$ will be used for generaal terms of $\Lambda$.  

\begin{definition}[Approximants]\label{approximants}
Let $\atrmm \in \Lambda$.  For each $n \geq 0$, define  $\nth{\atrmm}{n} \subseteq \lfin$, inductively as follows.
\[
\begin{array}{llll}
\least  & \in &\hspace*{.1in} & \nth{\atrmm}{0} \\
\nth{\atrmm}{k}& \subseteq&&\nth{\atrmm}{k+1} \\
\ctrmm &\in&&\nth{\atrmm}{k}, \ \text{ if }  \ctrmm \in  \close{\nth{\atrmm}{k}} \\
\ctrmm &\in && \nth{\atrmm}{k}, \ \text{ if }   \ \btrmm\ \ll\ \whnf{\atrmm}, \ \ctrmm\ \in \nth{\btrmm}{k} \\
\ctrmms & \in && \nth{\atrmms}{k}, \ \mbox{ if } |\ctrmms| = | \atrmms|, \ (\forall 1 \leq i \leq |\atrmms|) \  \ctrmms\PBR{i} \in \atrmms\PBR{i} \\
\pterm{\aprob}{\abs{\avar}{\ctrmm}} &\in&& \nth{\pterm{\bprob}{\abs{\avar}{\atrmm}}}{k+1}, \ \ \mbox{ if } \aprob < \bprob, \aprob \times \ctrmm \in \nth{\bprob \times \atrmm}{k} \\
\pterm{\aprob}{\avar \A \ctrmms} &\in&& \nth{\pterm{\bprob}{\avar \A \atrmms}}{k+1}, \ \ \mbox{ if } \aprob < \bprob, |\atrmms| = |\ctrmms|, \ (\forall 1 \leq i \leq |\atrmms|) \  \aprob \times \ctrmms\PBR{i} \in \nth{\bprob \times \atrmms\PBR{i}}{k}
\end{array}
\]
\end{definition}

\begin{definition}
$\ctrmm \lll \atrmm$ if $(\exists k) \ \ctrmm \in \nth{\atrmm}{k}$
\end{definition}

We identify the key properties of $\lll$ following proposition 2.3 of \citet{LAWSON1998247}.
\begin{lemma}\label{lawson1}
Let $\ctrmm,\dtrmm \in \lfin$.  
\begin{align}
\ctrmm\ \simreln\  \atrmm\ &\ \Longleftarrow\ \  \ctrmm\ \lll\  \atrmm  \label{eqn:1}\\
\ctrmm\ \lll\ \btrmm\ &\ \Longleftarrow\ \  \ctrmm\ \simreln\ \dtrmm\ \lll\  \atrmm\ \simreln\  \btrmm \label{eqn:2} \\
\ctrmm\ \lll\ \atrmm\ &\ \Longrightarrow\ \  (\exists \dtrmm) \ctrmm \lll \dtrmm \lll \atrmm \label{eqn:3}
\end{align}
\end{lemma}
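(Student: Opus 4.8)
The plan is to prove \ref{eqn:1}, \ref{eqn:2} and \ref{eqn:3} by induction on the stage index $k$ witnessing $\ctrmm \in \nth{\atrmm}{k}$, equivalently by structural induction on the derivation of $\ctrmm\ \lll\ \atrmm$ through the clauses of Definition~\ref{approximants}. The standing toolkit is the list of properties already recorded for $\simreln$: it is a precongruence (Theorem~\ref{simthm}); it satisfies $\leq\ \SEMI\ \simreln\ \SEMI\ \leq\ =\ \simreln$ (whence $\leq\ \subseteq\ \simreln$ by reflexivity); the up-to-lifting law $\atrmm\ \bisimreln\ \whnf{\atrmm}$ holds; and $\simreln$ is closed under sub-convex combinations. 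I will also use Lemma~\ref{split} to convert matchings of finite distributions into the quantitative weight inequalities that the strict clauses ($\aprob < \bprob$) of Definition~\ref{approximants} demand. For \ref{eqn:1} the induction is routine: $\least\ \simreln\ \atrmm$ at the base; the closure clause by closure of $\simreln$ under sub-convex combinations; the clause with $\btrmm\ \ll\ \whnf{\atrmm}$ by composing the induction hypothesis $\ctrmm\ \simreln\ \btrmm$ with $\btrmm\ \leq\ \whnf{\atrmm}\ \bisimreln\ \atrmm$; and the abstraction and application clauses by precongruence, with the strict margin $\aprob < \bprob$ giving $\aprob \times \ctrmm\ \leq\ \bprob \times \ctrmm$, which is absorbed by $\leq\ \SEMI\ \simreln\ \subseteq\ \simreln$.

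For \ref{eqn:2} I would factor the statement through the two one-sided laws
\[
(a)\quad \ctrmm\ \simreln\ \dtrmm\ \lll\ \atrmm\ \Longrightarrow\ \ctrmm\ \lll\ \atrmm,
\qquad
(b)\quad \dtrmm\ \lll\ \atrmm\ \simreln\ \btrmm\ \Longrightarrow\ \dtrmm\ \lll\ \btrmm,
\]
since applying $(b)$ to $\dtrmm\ \lll\ \atrmm\ \simreln\ \btrmm$ and then $(a)$ to $\ctrmm\ \simreln\ \dtrmm\ \lll\ \btrmm$ yields the full claim. Law $(b)$ is the heart of the lemma: it says a simulation transports finite approximants of its source to finite approximants of its target. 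I would prove it by induction on $k$ with $\dtrmm \in \nth{\atrmm}{k}$; at each weak-max transition of $\dtrmm$ the simulation $\atrmm\ \simreln\ \btrmm$ supplies, through $\simfn{}$ and the probabilistic lift $\close{\cdot}$, a matching transition of $\btrmm$, and Lemma~\ref{split} turns this matching into the strict weight inequalities needed to re-run the corresponding clause of Definition~\ref{approximants} over $\btrmm$. Finiteness of the support of $\dtrmm$ together with the strict margins $\aprob < \bprob$ are exactly what leave room for the matching weights of $\btrmm$ to remain strictly below. Law $(a)$ is a downward-closure statement for the approximant set under $\simreln$ restricted to $\lfin$; it is largely delivered by the closure clause of Definition~\ref{approximants}, which closes the approximant set under $\close{\cdot}$, the remaining structural cases being matched componentwise after observing that $\ctrmm\ \simreln\ \dtrmm$ forces $\ctrmm$ to share the head shape of $\dtrmm$ at no greater weight.

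For \ref{eqn:3} (interpolation) I would build the interpolant $\dtrmm$ by structural induction on the derivation of $\ctrmm\ \lll\ \atrmm$, exploiting the slack in the strict clauses: wherever $\ctrmm$ is admitted at head weight $\aprob$ with $\aprob < \bprob$ against $\atrmm$, pick a rational $\aprob'$ with $\aprob < \aprob' < \bprob$ as the head weight of $\dtrmm$ and recurse into the bodies and argument vectors, while the intermediate-term clause ($\btrmm\ \ll\ \whnf{\atrmm}$) furnishes an unfolding at which to split the derivation. Collecting these choices yields a finite $\dtrmm$ with $\ctrmm\ \lll\ \dtrmm\ \lll\ \atrmm$; density of the rationals and Corollary~\ref{lldist} guarantee that the interpolating weights exist.

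The step I expect to be the main obstacle is law $(b)$ of \ref{eqn:2}: reconciling the syntactic, stage-indexed construction of approximants with the coinductive, quantitative matching provided by $\simfn{}$. The delicate point is preserving the \emph{strict} weight inequalities through the induction — a simulation only guarantees weights up to the non-strict inequality built into $\close{\cdot}$, so one must combine finiteness of $\dtrmm$ with Lemma~\ref{split} to recover a uniform strict margin at each stage rather than merely a non-strict bound.
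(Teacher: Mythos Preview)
Your proposal is correct and follows exactly the approach the paper indicates: induction on the stage index $k$ witnessing $\ctrmm \in \nth{\atrmm}{k}$. The paper's own proof is a single sentence (``All proofs proceed by routine induction on $k$ such that $\ctrmm \in \nth{\atrmm}{k}$''), so your outline---including the factoring of \ref{eqn:2} into the two one-sided laws and the rational interpolation for \ref{eqn:3}---simply spells out what the paper leaves implicit. Your worry about strictness in law $(b)$ is slightly overstated: since the strict margin $\aprob < \bprob$ lives on the approximant side and simulation yields $\bprob \leq \bprob'$ on the target side, the composition $\aprob < \bprob \leq \bprob'$ preserves strictness directly, without needing a separate uniform-margin argument.
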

\begin{proof}
All proofs proceed by routine induction on $k$ such that $\ctrmm \in \nth{\atrmm}{k}$.   
\end{proof}

$\lll$ determines the simulation order.  
\begin{lemma}\label{simvialll}
\[ \atrmm\ \simreln\ \btrmm\ \Longleftrightarrow\  (\forall \ctrmm) \ [ \ctrmm\ \lll\ \atrmm\ \Longrightarrow\ \ctrmm\ \simreln\ \btrmm] 
\]
\end{lemma}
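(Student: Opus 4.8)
The forward implication is immediate and needs nothing beyond Lemma~\ref{lawson1}: if $\atrmm\ \simreln\ \btrmm$ and $\ctrmm\ \lll\ \atrmm$, then \eqref{eqn:1} gives $\ctrmm\ \simreln\ \atrmm$, and transitivity of the preorder $\simreln$ yields $\ctrmm\ \simreln\ \btrmm$. The content of the lemma is therefore the converse, which is really an \emph{approximation theorem}: $\atrmm$ should behave, in the $\simreln$-order, like the supremum of its finite approximants $\CBR{\ctrmm \mid \ctrmm\ \lll\ \atrmm}$, so that any $\btrmm$ dominating all of them dominates $\atrmm$ itself.

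For the converse I assume that every $\ctrmm\ \lll\ \atrmm$ satisfies $\ctrmm\ \simreln\ \btrmm$ and aim at $\atrmm\ \simreln\ \btrmm$. Using $\atrmm\ \bisimreln\ \whnf{\atrmm}$ it suffices to establish $\whnf{\atrmm}\ \simreln\ \btrmm$, i.e. (unfolding $\simreln = \simfn{\simreln}$) to match each weak-max transition of $\whnf{\atrmm}$ against one of $\whnf{\btrmm}$ and check $\close{\simreln \cap \relspace}$ on the targets. I would separate the two sources of infinity. The \emph{weight} of convergence is handled by Corollary~\ref{lldist}, which reduces $\close{\areln}$ to the finite sub-distributions way-below the source; these are exactly the data recorded by the strict cuts $\aprob < \bprob$ in Definition~\ref{approximants}, so each such piece is realised by some $\ctrmm\ \lll\ \atrmm$ and is, by hypothesis, dominated by $\btrmm$. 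Passing back from these strict under-approximations to the full weight in the limit is then exactly admissibility of $\simfn{}$ for $\leq$-chains. The \emph{depth} of the tree I would handle by induction on the closure level, using that the closure ordinal of $\simreln$ is $\omega$ (Corollary~\ref{omegaint}), so that $\simreln = \bigcap_k \simreln_k$: I prove for every pair that domination of all approximants forces $\atrmm\ \simreln_k\ \btrmm$ for each $k$, the inductive step appealing to the same statement on the strictly shallower sub-terms --- the bodies of abstractions and the components reached by the external-call transitions $\ltsfuncall{\asym}{\PBR{i,n}}$ --- whose approximants are sub-approximants of those of $\atrmm$ by the last clauses of Definition~\ref{approximants}.

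The step I expect to be the main obstacle is the \emph{coherence} of these witnesses. The hypothesis supplies, for each finite approximant $\ctrmm$ separately, a weak-max transition of $\btrmm$ and a splitting matching (Lemma~\ref{split}); to obtain $\whnf{\atrmm}\ \simreln\ \btrmm$ these separate witnesses must be glued into a single matching in the limit. Probabilistic determinacy is what makes this possible: the target of $\weakmaxtransition{\albltau}{}$ is unique, so the per-approximant witnesses are forced to agree on a common target, and admissibility then assembles the weights. Verifying this gluing --- and, with it, reconciling the $\lll$-approximation of $\lfin$ with the distribution order $\leq$ on which admissibility is phrased, and confirming that the strict-cut approximants genuinely exhaust $\whnf{\atrmm}$ in the limit --- is where the real work of the proof lies; the directedness of $\CBR{\ctrmm \mid \ctrmm\ \lll\ \atrmm}$ needed to extract a cofinal $\leq$-chain follows routinely from the closure clauses of Definition~\ref{approximants} together with the interpolation property \eqref{eqn:3}.
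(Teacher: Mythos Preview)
Your forward direction matches the paper verbatim. For the converse, you have the right skeleton: the paper also proves the stratified statement
\[
(\forall\, \ctrmm \in \nth{\atrmm}{k})\ [\ctrmm\ \simreln\ \btrmm]\ \Longrightarrow\ \atrmm\ \simreln_k\ \btrmm
\]
by induction on $k$, and then intersects over $k$ using the $\omega$ closure ordinal. So the architecture is the same.

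Where you diverge is in what you flag as the ``main obstacle''. The coherence/gluing problem you describe --- assembling per-approximant matchings into a single limit matching --- does not arise in the paper's argument. The paper exploits $\close{\simreln}=\simreln$ and $\close{\lll}=\lll$ to reduce, at each level $k+1$, to pairs in $\relspace$; together with linearity of abstraction under $\uplus$ this means one may assume $\atrmm$ is a \emph{point} mass $\pterm{\aprob}{\avar\A\atrmms}$ or $\pterm{\aprob}{\abs{\avar}{\atrmm'}}$. From there the proof is entirely by hand: the approximant $\pterm{\aprob'}{\avar\A\vec{\Omega}}$ with $\aprob'<\aprob$ forces $\aprob\leq\bprob$, and placing a level-$k$ approximant $\dtrmm$ of $\atrmms\PBR{i}$ in slot $i$ (with $\Omega$ elsewhere) transports the induction hypothesis to the components. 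There is no limit to take and no matchings to glue; the finite approximants at level $k{+}1$ act as probes that read off the required structure of $\btrmm$ directly.

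Your route via admissibility and Corollary~\ref{lldist} is not wrong, but it fights a battle the closure properties let you avoid. If you want to keep your framing, note that the ``gluing'' is already absorbed by Lemma~\ref{split}: its characterisation is in terms of inequalities over $\areln$-closed subsets, which pass to suprema without any compatibility of explicit matchings. Either way, the cleaner move is the paper's: reduce to point masses via $\close{\cdot}$, then build the approximants explicitly.
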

\begin{proof}
The forward direction follows from equation~\ref{eqn:1} of lemma~\ref{lawson1} and transitivity of $\simreln$.  

For the converse, we prove by induction on $k$ that:
\[
(\forall k) \ [ \atrmm\ \simreln_k \ \btrmm] \Longleftarrow\ \ (\forall \ctrmm\ \in\ \nth{\atrmm}{k}) \  [\ctrmm\ \simreln\ \btrmm]
\]
The base case is immediate.  

Consider the inductive case at $k+1$.  
Since $\close{\simreln} = \simreln$, and $\close{\lll} = \lll$, it suffices to prove for $\atrmm,\btrmm, \ctrmm$ such that $\PBR{\PBR{\ctrmm,\atrmm},\PBR{\ctrmm,\btrmm}} \subseteq\ \relspace$.  Simplifying a bit further, using linearity of abstraction wrt formal sums, we deduce that it suffices to consider just the following two cases.

\begin{itemize}
\item $\atrm = \pterm{\aprob}{\avar \A \ \atrmms}, \btrm = \pterm{\bprob}{\avar \A \ \btrmms}$, where $|\atrmms| = |\btrmms| $.  We are aiming to prove $\atrm\  \simreln_{k+1} \ \btrm$.  

Consider $\ctrm = \pterm{\aprob'}{\avar \A \ \PLURAL{\Omega}}$, for any $\aprob' < \aprob$.  $\ctrm \in \nth{\atrmm}{1} \subseteq \nth{\btrmm}{1}$.  So, we deduce that $\aprob \leq \bprob$.

Let $1 \leq i \leq |\atrmms|$.   We need to show that $\atrmms\PBR{i} \  \sim_k\ \btrmms\PBR{i}$.  By induction hypothesis, it suffices to show that $\nth{\atrmms\PBR{i}}{k} \subseteq\ \nth{\btrmms\PBR{i}}{k}$.  Let $\dtrmm \in \nth{\atrmms\PBR{i}}{k}$.  Then, $\ctrm = \pterm{\aprob'}{\avar \A \ctrmms} \in \nth{\btrmms\PBR{i}}{k+1}$, where $\aprob' < \aprob$ and $\ctrmms\PBR{j} = \dtrmm, i=j$ and $\Omega$ otherwise.  By induction hypothesis, $\ctrm \in\  \nth{\btrmms}{k+1}$.  Thus, we deduce that $\dtrmm \in\ \nth{\btrmms\PBR{i}}{k}$.  

\item  $\atrm = \pterm{\aprob}{\abs{\avar}{\atrmm}}, \btrm = \pterm{\bprob}{\abs{\avar}{\btrmm}}$.  We are aiming to prove $\atrm\  \simreln_{k+1} \ \btrm$.  

Consider $\ctrm = \pterm{\aprob'}{\abs{\avar}{\Omega}}$, for any $\aprob' < \aprob$.  $\ctrm \in \nth{\atrmm}{1} \subseteq \nth{\btrm}{1}$.  Thus, $\aprob \leq \bprob$.

We need to show that $\atrmm \sim_k\ \btrmm$.  By induction hypothesis, it suffices to show that $\nth{\atrmm}{k} \subseteq \nth{\btrmm}{k}$.  Let $\dtrmm \in\nth{\atrmm}{k}$.  Then $\pterm{\aprob'}{\abs{\avar}{\dtrmm}} \in \nth{\atrmm}{k+1}  \subseteq\ \nth{\btrmm}{k+1}$.  Thus, we deduce that $\dtrmm \in\nth{\btrmm}{k}$.
\end{itemize}

\end{proof}
\begin{corollary}\label{lllsim}
$\atrmm \simreln \btrmm \Longleftrightarrow\  (\forall \ctrmm) \ [\ctrmm \lll \ \atrmm \Longrightarrow\  \ctrmm \lll \ \btrmm] 
$
\end{corollary}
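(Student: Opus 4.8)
The plan is to derive this corollary directly from Lemma~\ref{simvialll} together with the Lawson-style properties recorded in Lemma~\ref{lawson1}; no fresh induction on approximants is needed, since all the inductive work has already been packaged into those two lemmas.

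For the forward implication, I would assume $\atrmm \simreln \btrmm$ and fix an approximant $\ctrmm$ with $\ctrmm \lll \atrmm$, aiming to produce $\ctrmm \lll \btrmm$. The key observation is that equation~\ref{eqn:2} of Lemma~\ref{lawson1} is exactly a device for transporting an approximation across $\simreln$ on both sides. Since $\simreln$ is a preorder it is reflexive, so $\ctrmm \simreln \ctrmm$; feeding the chain $\ctrmm \simreln \ctrmm \lll \atrmm \simreln \btrmm$ into equation~\ref{eqn:2} (instantiating its middle term with $\ctrmm$ itself) immediately yields $\ctrmm \lll \btrmm$. Equivalently one could first interpolate via equation~\ref{eqn:3} to obtain some $\dtrmm$ with $\ctrmm \lll \dtrmm \lll \atrmm$ and then combine equations~\ref{eqn:1} and~\ref{eqn:2}, but the reflexive instance is the most economical route.

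For the converse, I would assume that every finite approximant of $\atrmm$ is also an approximant of $\btrmm$, and reduce the goal $\atrmm \simreln \btrmm$ to the right-hand side of Lemma~\ref{simvialll}, namely $(\forall \ctrmm)\,[\ctrmm \lll \atrmm \Rightarrow \ctrmm \simreln \btrmm]$. Given $\ctrmm \lll \atrmm$, the standing assumption delivers $\ctrmm \lll \btrmm$, and then equation~\ref{eqn:1} of Lemma~\ref{lawson1} converts this to $\ctrmm \simreln \btrmm$; invoking Lemma~\ref{simvialll} in the $\Longleftarrow$ direction closes the argument.

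I would not expect a serious obstacle here, as the corollary is essentially a formal rearrangement of facts already in hand: it upgrades Lemma~\ref{simvialll} by replacing the test condition ``$\ctrmm \simreln \btrmm$'' with the purely syntactic ``$\ctrmm \lll \btrmm$''. The one point demanding care is the correct instantiation of equation~\ref{eqn:2}, which must be applied with its intermediate approximant taken to be $\ctrmm$ and its soundness resting on reflexivity of $\simreln$, rather than with a genuinely distinct interpolant. The moral content is simply that $\lll$ and $\simreln$ induce the same ``below'' relation once both are probed against finite approximants, which is precisely what the interpolation and transport properties of Lemma~\ref{lawson1} were engineered to guarantee.
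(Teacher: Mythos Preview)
Your proposal is correct and mirrors the paper's own proof exactly: the forward direction is obtained from equation~\ref{eqn:2} of Lemma~\ref{lawson1} (via the reflexive instance you describe), and the converse combines equation~\ref{eqn:1} with the $\Longleftarrow$ direction of Lemma~\ref{simvialll}. There is no meaningful divergence in strategy or detail.
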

\begin{proof}
The forward direction comes from equation~\ref{eqn:2} .  The backward direction follows from  equation~\ref{eqn:1} and lemma~\ref{simvialll}.
\end{proof}

\begin{definition}[Lawson topology]
$\lquot = \Lambda/{\scriptstyle \bisimreln}$. 
Let
\[
\begin{array}{lll}
\dup{\ctrmm} &=& \CBR{ \atrmm \mid \ctrmm \lll \atrmm} \\
\up{\ctrmm} &=& \CBR{ \atrmm \mid \ctrmm \simreln \atrmm} \\
\end{array}
\]
The \ltop\ on $\lquot$ has subbasic open sets $\CBR{\dup{\ctrmm}, \lawson{\ctrmm}  \ \mid\ \ctrmm \in \lfin}$.
\end{definition}

$\dup{\ctrmm}$ is up-closed wrt $\simreln$, and $\down{\atrmm}$ is closed in the above topology.  
\begin{lemma}\hfill 
\begin{itemize}
\item $\dup{\ctrmm} \SEMI \simreln = \dup{\ctrmm}$. 
\item $\down{\atrmm} = \CBR{ \btrmm \mid \btrmm \simreln \atrmm} = \bigcap  \CBR{ (\lquot \setminus \dup{\ctrmm}) \mid \lfin \ni \ctrmm\not\lll \atrmm} $ is closed.  
\end{itemize}
\end{lemma}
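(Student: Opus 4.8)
The plan is to reduce both bullets to Corollary~\ref{lllsim}, which already packages the compatibility between $\lll$ and $\simreln$; the topological content (closedness) then comes for free from the definition of the \ltop. Before either bullet I would record that both $\dup{\ctrmm}$ and $\down{\atrmm}$ descend to well-defined subsets of the quotient $\lquot$: if $\atrmm \bisimreln \atrmm'$ then in particular $\atrmm \simreln \atrmm'$, so Corollary~\ref{lllsim} gives $\ctrmm \lll \atrmm \Longleftrightarrow \ctrmm \lll \atrmm'$, showing membership in $\dup{\ctrmm}$ is $\bisimreln$-invariant; membership in $\down{\atrmm}$ is invariant on the other argument by the same token. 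This is the only place where one must keep track of the quotient.

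For the first bullet I would prove the two inclusions separately. The inclusion $\dup{\ctrmm} \subseteq \dup{\ctrmm} \SEMI \simreln$ is just reflexivity of the preorder $\simreln$: any $\atrmm$ with $\ctrmm \lll \atrmm$ satisfies $\atrmm \simreln \atrmm$. For $\dup{\ctrmm} \SEMI \simreln \subseteq \dup{\ctrmm}$, suppose $\ctrmm \lll \atrmm$ and $\atrmm \simreln \btrmm$; the forward direction of Corollary~\ref{lllsim}, instantiated at this particular $\ctrmm$, yields $\ctrmm \lll \btrmm$, that is $\btrmm \in \dup{\ctrmm}$. This simultaneously re-establishes that $\dup{\ctrmm}$ is $\simreln$-upward closed, which is what legitimizes it as a Scott-open subbasic set of the \ltop.

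For the second bullet the only genuine manipulation is unwinding the intersection of complements. Membership $\btrmm \in \bigcap \CBR{(\lquot \setminus \dup{\ctrmm}) \mid \lfin \ni \ctrmm \not\lll \atrmm}$ says precisely: for every $\ctrmm \in \lfin$ with $\ctrmm \not\lll \atrmm$ we have $\btrmm \notin \dup{\ctrmm}$, i.e. $\ctrmm \not\lll \btrmm$. Contraposing the bound variable turns this into $(\forall \ctrmm)\,[\ctrmm \lll \btrmm \Rightarrow \ctrmm \lll \atrmm]$, which by Corollary~\ref{lllsim} (with the two terms in the roles of $\atrmm,\btrmm$ swapped) is exactly $\btrmm \simreln \atrmm$, i.e. $\btrmm \in \down{\atrmm}$. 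This proves both displayed equalities. Closedness is then immediate: each $\dup{\ctrmm}$ is a subbasic open of the \ltop, so each $\lquot \setminus \dup{\ctrmm}$ is closed, and $\down{\atrmm}$ is exhibited as an intersection of such sets, hence closed.

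I expect no serious obstacle, since the substantive work is already discharged by Lemma~\ref{lawson1} and Corollary~\ref{lllsim}. The one point that merits care is the quantifier/De~Morgan bookkeeping in the second bullet: I would check that the index $\ctrmm$ of the intersection ranges over exactly the basis $\lfin$, so that the characterization ``$\forall \ctrmm,\ \ctrmm \lll \btrmm \Rightarrow \ctrmm \lll \atrmm$'' used by Corollary~\ref{lllsim} matches the complement description verbatim, with no slack introduced by non-basis elements. Verifying the $\bigsqcup$-to-quotient well-definedness noted above is the only other routine check.
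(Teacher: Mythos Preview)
Your proposal is correct and follows essentially the same route as the paper. The only cosmetic difference is that the paper invokes equation~(\ref{eqn:2}) of Lemma~\ref{lawson1} directly for the first item and for the inclusion $\down{\atrmm}\subseteq\bigcap(\cdots)$, reserving Corollary~\ref{lllsim} for the reverse inclusion, whereas you route everything uniformly through Corollary~\ref{lllsim}; since the forward direction of that corollary is itself just equation~(\ref{eqn:2}), the two arguments are interchangeable.
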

\begin{proof}
The first item follows from equation~\ref{eqn:2}.

Let $\ctrmm\not\lll \atrmm$.   $\lquot \setminus \dup{\ctrmm}$  contains $\atrmm$.  Let $\btrmm \simreln \atrmm$.  If $\ctrmm \lll \btrmm$, then by equation~\ref{eqn:2}, $\ctrmm \lll \atrmm$, a contradiction.  Thus,  $\simreln \SEMI (\lquot \setminus \dup{\ctrmm}) = (\lquot \setminus \dup{\ctrmm}) $, and we deduce that $\down{\atrmm} \subseteq \bigcap  \CBR{ (\lquot \setminus \dup{\ctrmm}) \mid \lfin \ni \ctrmm\not\lll \atrmm} $.  By corollary~\ref{lllsim}, for any $\btrmm \not\simreln \atrmm$, there is a $\dtrmm \in \lfin$ such that $\dtrmm \lll \btrmm, \dtrmm \not\lll\atrmm$.   Thus, $\down{\atrmm} \supseteq \bigcap  \CBR{ (\lquot \setminus \dup{\ctrmm}) \mid \lfin \ni \ctrmm\not\lll \atrmm} $.  
\end{proof}
We are now able to mimic the statement and proof of proposition 4.3 in~\citet{LAWSON1998247}.  
\begin{lemma}
$\PBR{\lquot,\text{\ltop}}$ is Hausdorff, regular, and second countable.
\end{lemma}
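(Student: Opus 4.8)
The plan is to verify the three properties one at a time, leaning throughout on interpolation (equation~\ref{eqn:3}), the mixed transitivity of $\lll$ and $\simreln$ (equation~\ref{eqn:2}), the inclusion $\lll\ \subseteq\ \simreln$ (equation~\ref{eqn:1}), and corollary~\ref{lllsim}. Two structural facts set the stage. First, $\lfin$ is countable, so the subbasis $\CBR{\dup{\ctrmm},\lawson{\ctrmm} \mid \ctrmm \in \lfin}$ is countable; closing under finite intersections yields a countable basis, giving second countability with no further work. Second, by definition $\lawson{\ctrmm} = \lquot \setminus \up{\ctrmm}$, so each principal filter $\up{\ctrmm}$ is Lawson-closed, a fact I use repeatedly. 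I also record two instances of equation~\ref{eqn:2} with reflexivity of $\simreln$: from $\ctrmm \lll \dtrmm \simreln \atrmm$ one gets $\ctrmm \lll \atrmm$, and from $\ctrmm \simreln \dtrmm \lll \atrmm$ one gets $\ctrmm \lll \atrmm$; these mixed-transitivity steps are used freely below.

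For Hausdorff I would take $\atrmm \not\bisimreln \btrmm$ and assume, w.l.o.g., $\atrmm \not\simreln \btrmm$. By corollary~\ref{lllsim} there is $\ctrmm \in \lfin$ with $\ctrmm \lll \atrmm$ and $\ctrmm \not\lll \btrmm$, and interpolation gives $\dtrmm$ with $\ctrmm \lll \dtrmm \lll \atrmm$. Then $\atrmm \in \dup{\dtrmm}$, and $\btrmm \in \lawson{\dtrmm}$: were $\dtrmm \simreln \btrmm$, then $\ctrmm \lll \dtrmm \simreln \btrmm$ would force $\ctrmm \lll \btrmm$ by equation~\ref{eqn:2}, contradicting the choice of $\ctrmm$. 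Finally $\dup{\dtrmm} \cap \lawson{\dtrmm} = \emptyset$ since $\dup{\dtrmm} \subseteq \up{\dtrmm}$ by equation~\ref{eqn:1} while $\lawson{\dtrmm} = \lquot \setminus \up{\dtrmm}$, so the two points are separated.

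For regularity I would show every point has a neighbourhood basis of closed sets, which it suffices to check on the subbasis: a basic open is a finite intersection $S_1 \cap \cdots \cap S_n$ of subbasic opens, and if each $S_i$ admits an open $W_i$ with $\atrmm \in W_i \subseteq \overline{W_i} \subseteq S_i$, then $W = \bigcap_i W_i$ satisfies $\atrmm \in W \subseteq \overline{W} \subseteq \bigcap_i S_i$. So fix $\atrmm$ and a subbasic open $S \ni \atrmm$. If $S = \dup{\ctrmm}$, so $\ctrmm \lll \atrmm$, I interpolate $\ctrmm \lll \dtrmm \lll \atrmm$ and take $W = \dup{\dtrmm}$; then $W \subseteq \up{\dtrmm}$ with $\up{\dtrmm}$ closed gives $\overline{W} \subseteq \up{\dtrmm}$, and $\up{\dtrmm} \subseteq \dup{\ctrmm}$ because $\ctrmm \lll \dtrmm \simreln z$ yields $\ctrmm \lll z$. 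If instead $S = \lawson{\ctrmm}$, so $\ctrmm \not\simreln \atrmm$, corollary~\ref{lllsim} gives $\dtrmm \lll \ctrmm$ with $\dtrmm \not\lll \atrmm$; I interpolate $\dtrmm \lll \dtrmm' \lll \ctrmm$ and take $W = \lawson{\dtrmm'}$. Then $\atrmm \in W$, since $\dtrmm' \simreln \atrmm$ would give $\dtrmm \lll \dtrmm' \simreln \atrmm$, hence $\dtrmm \lll \atrmm$, a contradiction; moreover $\up{\ctrmm} \subseteq \dup{\dtrmm'}$ (from $\dtrmm' \lll \ctrmm \simreln z$), and since $W$ is disjoint from the Scott-open $\dup{\dtrmm'}$ we get $\overline{W} \subseteq \lquot \setminus \dup{\dtrmm'} \subseteq \lquot \setminus \up{\ctrmm} = S$.

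The hard part will be the regularity argument: it is where continuity of the order, packaged as interpolation in equation~\ref{eqn:3}, is genuinely needed, and one must take care to shrink inside \emph{both} kinds of subbasic open, using closedness of the filters $\up{\cdot}$ for the Scott-open case and disjointness of $\lawson{\dtrmm'}$ from $\dup{\dtrmm'}$ for the other. The reduction of regularity to a closed-neighbourhood basis on the subbasis, together with the repeated mixed-transitivity steps of equation~\ref{eqn:2}, is the glue holding the argument together; by contrast, Hausdorff and second countability are comparatively routine.
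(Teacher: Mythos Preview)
Your proof is correct and follows the same overall scheme as the paper: second countability from countability of $\lfin$, Hausdorff by producing a separating element of $\lfin$, and regularity by exhibiting a closed neighbourhood inside each subbasic open. Two points of comparison are worth noting. For Hausdorff, the paper invokes lemma~\ref{simvialll} rather than corollary~\ref{lllsim}, which directly yields $\ctrmm \lll \atrmm$ with $\ctrmm \not\simreln \btrmm$; then $\dup{\ctrmm}$ and $\lawson{\ctrmm}$ separate immediately, with no interpolation needed. Your detour through corollary~\ref{lllsim} and equation~\ref{eqn:3} is fine but one step longer. For the $\lawson{\ctrmm}$ case of regularity, the paper appeals to the preceding lemma that principal ideals $\down{\cdot}$ are Lawson-closed and names $\down{\ctrmm}$ as the closed neighbourhood, whereas you interpolate below $\ctrmm$ and use $\lquot \setminus \dup{\dtrmm'}$ as the closed set sandwiched between $\lawson{\dtrmm'}$ and $\lawson{\ctrmm}$. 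Your route is more explicit and entirely self-contained from equations~\ref{eqn:1}--\ref{eqn:3} and corollary~\ref{lllsim}, at the cost of not exploiting the closedness-of-$\down{}$ lemma the paper set up for this purpose.
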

\begin{proof}
$\lfin$  is countable, so second countability follows.

To prove Hausdorff, let $\atrmm \not\simreln \btrmm$.  By lemma~\ref{simvialll}, there exists $\ctrmm$ such that $\ctrmm \lll \atrmm$ and $\ctrmm \not\simreln\ \btrmm$.  The required (subbasic) open sets are $\dup{\ctrmm}, \lawson{\ctrmm}$.

For regularity, it suffices to show that each subbasic open set
containing a point contains a closed neighborhood of the point.
\begin{itemize}
\item Let $\atrmm \in \dup{\ctrmm}$.  Then, there exists $\ctrmm \lll  \dtrmm \lll \atrmm$.  The required closed set is $\up{\dtrmm}$.

\item Let $\atrmm \in \lawson{\ctrmm}$. The required closed set is $\down{\ctrmm}$.
\end{itemize}
\end{proof}

As a corollary, $\PBR{\lquot,\text{\ltop}}$ is metrizable.  
\begin{definition}
Let $\close{\lquot}$ be the completion of the metric space $\PBR{\lquot,\text{\ltop}}$.
\end{definition}

$\close{\lquot}$ is a bounded Polish space; hence, compact.  Thus,  we are finally ready to formally view the weighted terms $\atrmm,\btrmm, \ldots$ as (countable) sub-convex sums of unit masses.  Furthermore, it also serves as an appropriate universe to interpret a language with fully general continuous distributions.

\section{Conclusions and Future work}
This article establishes the basic ingredients of a model for statistical probabilistic programming languages.  In contrast to the current research, we develop the foundations using a combination of traditional tools, namely open bisimulation and probabilistic simulation.  

These ideas remain but a promise, until used in a thorough investigation of the semantics of a programming language in this paradigm.  Not least because the ``competing'' methods of Quasi Borel Spaces~\citep{10.1145/3290349,HOSY17} provide precisely such a thorough investigation that includes evaluation strategies (call by value vs call by name), recursive types and continuous distributions.  We take hope from the extant research that demonstrates that open bisimulation is amenable to all these features and also provides excellent accounts of state and parametricity.

The coinductive and metric foundations of our approach provide the opportunity to explore principled mechanisms for approximate reasoning.   The opportunities for robust and approximate reasoning abounds in \statprob\ languages, driven by the desire to accommodate symbolic reasoning engines. For example:
\begin{itemize}
\item Approximating continuous distributions by other continuous distributions  or finite distributions to simplify symbolic reasoning

\item Approximating infinite computations by (large enough) finite unwindings
\end{itemize}
This motivates the desire to investigate coinductive principles for approximate reasoning, in the spirit of~\citet{DBLP:journals/tcs/DesharnaisGJP04}'s explorations into metric bisimulations for (concurrent) labelled Markov chains.  

\subsection*{Acknowledgements. }
I wish to thank Samson Abramsky for two decades of mentoring and friendship.  

This material is based upon work supported by the National Science Foundation under Grant
No. 1617175. Any opinions, findings, and conclusions or recommendations expressed in this material
are those of the author and do not necessarily reflect the views of the National Science Foundation.

\bibliographystyle{ACM-Reference-Format}
\bibliography{bib}

\appendix
\section{Proof of Substitution lemma}\label{proof}

We use the following notation.

\begin{itemize}
\item $\PBR{\aprobs,\abs{\avars}{\btrmms}}$ for the weighted term $\CBR{ \aprobs\PBR{i} \times \abs{\avar}{\btrmms\PBR{i}}}$.  Thus, $\PBR{\aprobs,\abs{\avars}{\atrmms}}$  has a $\ltssignal$ transition to $\pterm{\sum_i \aprobs\PBR{i}}{\abs{\avar}{\avar}}$ and $\ltsret{\asym }$ transition to $\CBR{ \aprobs\PBR{i} \times \subst{\btrmms\PBR{i}}{\avar}{\asym}}$ that we write as $\PBR{\aprobs,\subst{\btrmms}{\avar}{\asym}}$.  
\item $\atrmms\ \simreln\ \btrmms$, if $|\atrmms| = |\btrmms|$, and $(\forall 1 \leq i \leq |\atrmms|) \ \atrmms\PBR{i} \ \simreln\ \ctrmms\PBR{i}$
\end{itemize}

\begin{lemma}\label{basecase}
Let $\atrmm_1 \A \asub_1\ \substsimreln\ \atrmm_2 \A \asub_2 $, and $\whnf{\atrmm_1 \asub_1} = \atrmm_1 \asub_1$.  Then, 
$\atrmm_1 \A \asub_1\  \simfn{\substsimreln} \  \atrmm_2 \A \asub_2$
\end{lemma}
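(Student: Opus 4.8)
The hypothesis $\atrmm_1 \A \asub_1 \substsimreln \atrmm_2 \A \asub_2$ unpacks, by the definition of $\substsimreln$ as the least relation of its shape, into a skeleton simulation $\atrmm_1 \simreln \atrmm_2$ and a pointwise substitution simulation $\asub_1 \simreln \asub_2$. Since $\simfn{\substsimreln}$ is read through its $\relspace$-kernel, proving $\atrmm_1 \A \asub_1 \simfn{\substsimreln} \atrmm_2 \A \asub_2$ amounts, after computing $\whnf{\atrmm_2 \A \asub_2}$, to exhibiting the pair $\PBR{\atrmm_1 \A \asub_1, \whnf{\atrmm_2 \A \asub_2}}$ in $\close{\simfn{\substsimreln} \cap \relspace}$; the left component is already normal by assumption. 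The plan is to use the linearity and admissibility of evaluation and of $\simreln$ for $\uplus$, together with the fact that $\simreln$ is closed under sub-convex combinations, to split both normal forms into their abstraction part (an $\lnfa$ kernel) and their open-application parts (point-mass $\lnfo$ kernels, point masses being forced by the linearity obstruction of Example~\ref{lis}), dispatch each kernel pair separately, and recombine the resulting matchings. For each kernel pair the residual obligation is to match the visible weak max transitions --- $\ltssignal$, $\ltsret{\asym}$, and $\ltsfuncall{\csym}{\PBR{i,n}}$ --- the $\tau$ self-loop on normal forms being discharged by reflexivity.

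On an $\lnfa$ kernel, $\atrmm_1 \A \asub_1$ is a distribution of abstractions and I must match two labels. For $\ltssignal$ the targets are point masses at $\Id$ carrying the total abstraction weight, so the obligation reduces to a weight inequality $w_1 \le w_2$; I would derive it by tracing this weight to its two sources --- abstraction heads already present in the skeleton $\atrmm_1$, handled by the $\ltssignal$ step of $\atrmm_1 \simreln \atrmm_2$, and head occurrences of variables $\avar \in \dom{\asub_1}$ whose image $\asub_1(\avar)$ converges to abstractions, handled by $\asub_1(\avar) \simreln \asub_2(\avar)$ --- and summing. For $\ltsret{\asym}$ with $\asym$ fresh I apply both sides to $\asym$ and $\beta$-reduce; commuting the capture-avoiding, non-circular substitution past $\asym$ shows each target body has the form (skeleton body) under the substitution $\asub_i$ extended by $\SBR{\asym/\avar}$, and since the skeleton bodies are $\simreln$-related (by the matching $\ltsret{\asym}$ step of $\atrmm_1 \simreln \atrmm_2$) and the extended substitutions remain related (as $\asym \simreln \asym$), the targets lie in $\substsimreln$. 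The witnessing matching $\fm_{i,j}$ is then assembled, via Lemma~\ref{split}, from the skeleton's $\ltsret{\asym}$ matching together with the $\ltsret{\asym}$ matchings of the images $\asub_i(\avar)$.

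On an $\lnfo$ kernel, $\atrmm_1 \A \asub_1$ is a point mass $\pterm{\aprob}{\csym \A \atrmms}$; its head $\csym$ cannot lie in $\dom{\asub_1}$, since the non-circularity restriction forbids a domain variable from surviving in a normal form, so $\csym$ is genuinely free. The single label to match is $\ltsfuncall{\csym}{\PBR{i,n}}$, which selects the $i$-th of $n$ arguments. I split into two subcases according to where the head $\csym$ comes from: either $\csym$ is already the head of the skeleton $\atrmm_1$, or the skeleton head is some $\avar \in \dom{\asub_1}$ whose image $\asub_1(\avar)$ converges to a $\csym$-headed open application, so that the observed argument vector is the concatenation of the image's arguments with the skeleton's arguments instantiated by $\asub_1$. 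In each subcase I match the transition on the right through $\atrmm_1 \simreln \atrmm_2$ (and, in the second, $\asub_1(\avar) \simreln \asub_2(\avar)$), check that the two argument vectors have equal length so that the label $\ltsfuncall{\csym}{\PBR{i,n}}$ agrees, and verify that the selected targets are $\substsimreln$-related --- arguments from the image being $\simreln$-related outright, and arguments from the skeleton being (skeleton argument) under $\simreln$-related substitutions.

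The hard part, and where the genuine content sits, is the \emph{mixed} situation just indicated: the observable behaviour of $\atrmm_1 \A \asub_1$ is produced partly by the skeleton $\atrmm_1$ and partly by the substituted images $\asub_i(\avar)$, and I must stitch a single matching $\fm_{i,j}$ for $\close{\substsimreln}$ out of the two independent matchings delivered by $\atrmm_1 \simreln \atrmm_2$ and by $\asub_1 \simreln \asub_2$, correctly composing their weights (the skeleton routes weight onto a head that the image then further distributes) and, in the open-application subcase, concatenating and re-indexing the two argument vectors. I expect this composite-matching step, rather than any individual label, to be the crux; reducing it to an application of the splitting characterisation of Lemma~\ref{split} and to the linearity and admissibility of $\simreln$ for $\uplus$ is the real work of the lemma.
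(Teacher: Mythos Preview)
Your proposal is correct and covers the same ground as the paper's proof, but the organisation and the handling of the mixed case differ.

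The paper case-splits on the shape of the \emph{skeleton} $\atrmm_1$ (after using $\close{\simfn{\substsimreln}} = \simfn{\substsimreln}$ to reduce to point masses): (1) $\atrmm_1$ is a distribution of abstractions; (2) $\atrmm_1 = \pterm{\aprob}{\asym \A \atrmms}$ with $\asym \notin \dom{\asub_1}$; (3) $\atrmm_1 = \pterm{\aprob}{\asym \A \atrmms}$ with $\asub_1(\asym)$ a point-mass open application $\pterm{\bprob}{\bsym \A \btrmms}$. Your split on the kernel type of $\atrmm_1 \asub_1$ followed by ``tracing back to sources'' amounts to the same case analysis, just run in the other direction.

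The genuine difference is in your ``hard part'', the mixed $\lnfo$ subcase (the paper's case~3). You propose to build the composite matching directly, concatenating the image's argument vector with the skeleton's instantiated arguments and checking each coordinate lands in $\substsimreln$. The paper instead uses \emph{transitivity} of $\simfn{\substsimreln}$ to factor case~3 through two instances of case~2 and a $\leq$-step: first move $\asub_1$ to $\asub_2$ keeping the term fixed, then move the term keeping $\asub_2$ fixed, then absorb into $\whnf{\atrmm_2 \asub_2}$ via $\leq$. This dissolves the ``stitching'' you flag as the crux: no composite matching is ever assembled, because each factor is already an instance of the free-head case. Your direct route works too, but you have overestimated its difficulty; once the closure property has reduced the left-hand side to a point mass, the matching on the right is immediate and no splitting-lemma gymnastics are needed.

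One small point: in your $\lnfa$ analysis you list ``head occurrences of variables $\avar \in \dom{\asub_1}$ whose image converges to abstractions'' as a second weight source. Under the hypothesis $\whnf{\atrmm_1\asub_1}=\atrmm_1\asub_1$ this can only happen when the skeleton summand is the bare variable $\avar$ with no arguments (otherwise a $\beta$-redex survives), so this subcase is thinner than your phrasing suggests; the paper's case analysis leaves it implicit.
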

\begin{proof}
Since $\close{\simfn{\substsimreln}} = \simfn{\substsimreln}$, it suffices to consider the following three cases for the shape of $\atrmm_1 \asub_1$.

\begin{enumerate}
\item  $\atrmm_1 = \PBR{\aprobs,\abs{\avars}{\btrmms}}$

By alpha renaming, we can assume that $\avar$ is chosen such that $\rundef{\asub_i(\avar)}$, so that $\subst{\asub_2}{\avar}{\asym}$ and $\subst{\asub_1}{\avar}{\asym}$ are valid substitutions. Thus, 

$\atrmm_1 \asub_1 = \PBR{\aprobs,\abs{\avars}{\btrmms \A \asub_1}{}}$.  
The two immediate transitions are:
\begin{itemize}
\item Labeled $\ltssignal$ to $\pterm{\sum_i \aprobs\PBR{i}}{\abs{\avar}{\avar}}$.
\item Labeled $\ltsret{\asym }$ to $\subst{\PBR{\bprobs,\btrmms \A \asub_1}}{\avar}{\asym}$
\end{itemize}
By $\atrmm_1 \simreln \atrmm_2$,  $\PBR{\bprobs,\abs{\avars}{\btrmms'}} \leq\ \whnf{\atrmm_2}$, and $\btrmms \simreln\ \btrmms'$, where $\sum_i \sum_i \aprobs\PBR{i} \leq \sum_i \bprobs\PBR{i}$.  $\PBR{\bprobs,\abs{\avars}{\btrmms'}}$ has two immediate transitions as follows:
\begin{itemize}
\item Labeled $\ltssignal$ to $\pterm{\sum_i \bprobs\PBR{i}}{\abs{\avar}{\avar}}$.
\item Labeled $\ltsret{\asym }$ to $\subst{\PBR{\bprobs,\btrmms' \A \asub_2}}{\avar}{\asym}$
\end{itemize}
Result follows since $\simfn{\substsimreln} \SEMI\ \leq = \simfn{\substsimreln}$.  

\item $\atrmm_1 = \pterm{\aprob}{\asym \A \atrmms}$.  Let $n = |\atrmms|$.   $\rundef{\asub_1(\asym)}$.  

$\atrmm_1 \A  \asub_1$ has the following transitions.
\begin{itemize}
\item $\ltsfuncall{\asym }{\PBR{i,n}}$ transition to $\pterm{\aprob}{\atrmms\PBR{i} \A \asub_1}$, for $1 \leq i \leq |\atrmms|$; 
\item $\ltsfuncall{\asym }{\PBR{0,n}}$ transition to $\pterm{\aprob}{\Id}$
\end{itemize}
Since $\atrmm_1 \ \simreln\ \atrmm_2$, $\pterm{\aprob}{\asym \A \btrmms} \leq \whnf{\atrmm_2}$, where $|\btrmms| = |\atrmms|=n$ and $\atrmms \simreln\ \btrmms$.  So, $\atrmm_2 \A  \asub_1$ has the following matching transitions.
\begin{itemize}
\item $\ltsfuncall{\asym }{\PBR{i,n}}$ transition to $\pterm{\aprob}{\btrmms\PBR{i}\A \asub_2}$, for $1 \leq i \leq |\btrmms|$; 
\item $\ltsfuncall{\asym }{\PBR{0,n}}$ transition to $\pterm{\aprob}{\Id}$
\end{itemize}
Result follows since $  \pterm{\aprob}{{\asym \A \btrmms}} \A \asub_2  \leq  \atrmm_2\ \A \asub_2 $.  

     \item  $\asub_1(\asym) =\pterm{\bprob} {\bsym \A \btrmms}$.  So, $\atrmm_1 \A \ \asub_1= \pterm{\aprob} {\pterm{\bprob} {\bsym \A \btrmms}} \A  \atrmms \A \asub_1$, and $\rundef{\asub_1\PBR{\bsym}}$.  

Since $\asub_1(\asym) \simreln\ \asub_2(\asym)$, $(\exists \btrmms') \ \btrmms \simreln\ \btrmms'$  and $\pterm{\bprob} {\bsym \A \btrmms'} \leq \whnf{\asub_2(\asym)}$.

Since $\atrmm_1 \simreln\ \atrmm_2$, $\pterm{\aprob}{\asym \A \ctrmms} \in \whnf{\atrmm_2}$, where $\atrmms\ \simreln\ \ctrmms$.  

Transitivity of $\simfnsubst$ (inherited  from $\substsimreln$) on the following subgoals yields the required result.

\begin{enumerate}
\item $ \pterm{\aprob} {\pterm{\bprob} {\bsym \A \btrmms}} \A  \atrmms \A \asub_1 \  \simfnsubst\ \pterm{\aprob} {\pterm{\bprob} {\bsym \A \btrmms}} \A  \atrmms \A \asub_2 $.  

From case(2) since  $\asub_1 \simreln \asub_2, \rundef{\asub_1(\bsym)}$.

\item $\pterm{\aprob} {\pterm{\bprob} {\bsym \A \btrmms}} \A  \atrmms \A \asub_2\ \simfnsubst\ \pterm{\aprob} {\pterm{\bprob} {\bsym \A \btrmms'}} \A \ctrmms \A \asub_2$ 

From case (2), since $\rundef{\asub_1(\bsym)}$, $\btrmms \simreln \btrmms'$ and $\atrmms \simreln \ctrmms$.  

\item $ \pterm{\aprob} {\pterm{\bprob} {\bsym \A \btrmms'}} \A \ctrmms \A \asub_2 \ \leq\ \atrmm_2 \A \asub_2$

\end{enumerate}

\end{enumerate}
\end{proof}

\paragraph{Proof of Lemma~\ref{substsim}}
\begin{proof}
Let   $\atrmm_1 \A \asub_1 \substsimreln\ \atrmm_2 \A \asub_2 $. So, $\atrmm_1 \simreln \atrm_2, \asub_1 \simreln \asub2$.

We need to show that $\whnf{\atrmm_1 \asub_1} \simfnsubst\ \atrmm_2 \A \asub_2$.  Since 
$\simfn{\substsimreln}$ is admissible for $\leq$-chains, it suffices to show this for $\vals{\etrmm_1}$ where $\atrmm_1 \A \ \asub_1 \evals{} \etrmm_1$.  The proof proceeds by induction on the length of $\atrmm_1 \weaktransition{\albl}{} \etrmm_1$.  

The base case is proved in lemma~\ref{basecase}.

If $\atrmm_1 \eval{} \atrmm'_1$, then $\atrmm_1 \A \asub_1 \eval{} \atrmm'_1 \A \asub_1 $.  In this case, result follows from induction hypothesis on $\atrmm'_1 \asub_1, \atrmm_2 \asub_2$, since $ \atrmm'_1\ \bisimreln\ \atrmm_1 \simreln \atrmm_2$, so $\atrmm'_1 \A \asub_1 \substsimreln\ \atrmm_2 \ \A \asub_2$. 

So, we can assume that $\ltsstrongtransition{\atrmm_1 \A \asub_1}{\tau}{}$ is not induced by a reduction $\atrmm_1 \eval{}$ .  

In this case, $\atrmm_1 = \pterm{\aprob}{\asym \A \atrmms_1}$.  $\asym \in \dom{\asub_1}$.  As above we can assume that $\ltsstrongtransition{\atrmm_1 \A \asub_1}{\tau}{}$ is not induced by a reduction $\asub_1(\asym) \eval{}$.   
So,  $\asub_1(\asym) = \PBR{\bprobs,{\abs{\avars}{\atrmms'_1}}}$. Wlog, we assume that $\rundef{\asub_1(\avar)}$.  Then, $\atrmm_1 \A \asub_1 \eval{} \aprob \times \PBR{\bprobs,\atrmms'_1} \A \avars \A \asub'_1$, where
\[
\begin{array}{ lll}
\asub'_1(\bvar) &=& \left\{
                                \begin{array}{l}
                                    \asub_1\PBR{\bvar}, \ \bvar\not=\avar, \defined{\asub_1(\bvar)} \\
                                    \atrmms_1\PBR{i}, \ \bvar = \avars\PBR{i}, i \geq 1
                                \end{array}
                              \right.
\end{array}
\]
and $\avars$ is a vector of variables of same length as $|\atrmms_1|$ such that $\avar = \avars\PBR{1}$. 

Since $\atrmm_1\ \simreln\ \atrmm_2$, there exists $\pterm{\aprob}{\asym \A \atrmms_2} \ \leq\ \whnf{\atrmm_2}$ such that $\atrmms_1 \  \simreln\  \atrmms_2$.   Since $\asub_1(\asym) \ \simreln\ \asub_2(\asym)$,  there exists $\PBR{\bprobs,{\abs{\avars}{\atrmms_2'}}} \leq \whnf{\asub_2(\asym)}$ such that $\PBR{\bprobs,\atrmms'_1} \ \simreln\  \PBR{\bprobs,\atrmms'_2}$. 
Consider  $\aprob \times \PBR{\bprobs,\atrmms'_2} \A \avars \A \asub'_2$, where
\[
\begin{array}{ lll}
\asub'_2(\bvar) &=& \left\{
                                \begin{array}{l}
                                    \asub_2\PBR{\bvar}, \ \bvar\not=\avar, \defined{\asub_1(\bvar)} \\
                                    \atrmms_2\PBR{i}, \ \bvar = \avars\PBR{i}, i \geq 1
                                \end{array}
                              \right.
\end{array}
\]
and $\avars$ is a vector of variables of same length as $|\atrmms|=|\ctrmms|$ such that $\avar = \avars\PBR{1}$. 
Result follows from induction hypothesis on $ \atrmms' \A \asub'_1 , \ctrmms' \A \asub'_2$ and $\leq$ monotonicity of $\simfnsubst$.
\end{proof}

\end{document}